\documentclass[journal]{IEEEtran}
\IEEEoverridecommandlockouts

\usepackage{cite}
\usepackage{amsmath,amssymb,amsfonts}
\usepackage{algorithm}
\usepackage{algpseudocode}
\usepackage{subcaption}
\usepackage{booktabs} 
\usepackage{makecell} 
\usepackage{graphicx}
\usepackage{caption}
\usepackage[compact]{titlesec}  
\titlespacing{\section}{2pt}{2pt}{2pt}
\usepackage{textcomp}
\usepackage{xcolor}
\usepackage{comment}
\usepackage{hyperref}
\def\BibTeX{{\rm B\kern-.05em{\sc i\kern-.025em b}\kern-.08em
    T\kern-.1667em\lower.7ex\hbox{E}\kern-.125emX}}
\usepackage{bm}
\usepackage{cases}

\usepackage{algorithm}
\usepackage{algpseudocode}
\usepackage{amssymb}
\usepackage{amsfonts}
\usepackage{bm}
\usepackage{amsmath}
\usepackage{amsthm}
\usepackage[noabbrev,capitalise,nameinlink]{cleveref}
\usepackage{float}
\usepackage{graphicx}
\usepackage{xcolor}
\usepackage{epstopdf}
\usepackage{multirow}
\usepackage{caption,subcaption}
 \usepackage{enumitem}
\usepackage{etoolbox}

\newtheorem{lemma}{Lemma}[section]

\newtheorem{remark}{Remark}[section]
\newtheorem{assumption}{Assumption}[section]

\begin{document}

\title{Distributed Quantum Learning over Near-term Devices: Convergence Analysis and Security Design}

\author{\IEEEauthorblockN{Atit Pokharel*, Shaba Shaon*, Thomas Morris, and Dinh C. Nguyen}

\thanks{Atit Pokharel, Shaba Shaon, Thomas Morris, and Dinh C. Nguyen are with the Department of Electrical and Computer Engineering, University of Alabama in Huntsville, USA. e-mails:\{ap1284, ss0670, tommy.morris, dinh.nguyen\}@uah.edu. *Atit and Shaba equally contribute to this work.}
\IEEEcompsocitemizethanks{*This work was supported in part by the U.S. National Science Foundation (NSF) under Grant SaTC-2513164.}
}

\maketitle

\begin{abstract}
Distributed quantum learning (DQL) has emerged as a promising paradigm to scale quantum-enhanced machine learning by interconnecting multiple quantum devices. However, for efficient real-world deployment, it is essential to characterize how DQL converges under practical scenarios while simultaneously safeguarding multi-device quantum infrastructures from evolving security threats. Addressing these aspects in an integrated manner is key to ensuring both performance and resilience in large-scale DQL systems. Therefore, this paper presents a new DQL study where our innovation lies in: (i) conducting a holistic convergence analysis for DQL under practical settings, i.e., partial device participation, non-convex loss functions, and heterogeneous data distributions, (ii) developing a novel multi-layered post-quantum cryptographic architecture with a quantum neural network-powered adaptive mechanism that monitors conditions, evaluates threats, and adjusts parameters across three National Institute of Standards and Technology (NIST)-compliant levels. \textcolor{black}{Our theoretical framework and empirical validation reveal two key insights: (i) the derived convergence bound uncovers a fundamental trade-off between convergence rate, measurement shots, and the size of the participating device subset; and (ii) findings from our evaluations on a physical testbed modeling quantum control architectures expose the performance limitations of static post-quantum security}, while confirming that our adaptive framework effectively mitigates these overheads to preserve overall system efficiency. \textcolor{black}{Specifically, the hardware experiments demonstrate that our dynamic security mechanism reduces total security execution time by approximately 49\% relative to static high-security baselines, while maintaining a threat detection accuracy of over 91\%. Furthermore, extensive simulations validate our theoretical analysis, showing strong agreement between predicted and observed convergence trends.} The coupling of these two stages ensures both theoretically grounded performance optimization and adaptive threat resilience, enabling efficient, secure, and scalable DQL deployment.


\end{abstract}

\begin{IEEEkeywords}
Distributed quantum learning, security.
\end{IEEEkeywords}

\section{Introduction}
Distributed quantum computing (DQC) has emerged as a promising paradigm to overcome the limitations of single quantum processors by interconnecting multiple nodes to perform joint computations. 
\textcolor{black}{Leveraging the principles of quantum mechanics, such as superposition, entanglement, and quantum teleportation, DQC interconnects multiple near-term quantum processors (e.g., trapped-ion or superconducting nodes) to execute workloads that exceed a single device’s qubit count or connectivity \cite{main2025distributed}. This enables the distributed implementation of variational algorithms, such as variational quantum eigensolvers (VQEs) and quantum neural networks (QNNs), making it feasible to scale domain-specific tasks in quantum chemistry \cite{jones2024distributed}, healthcare \cite{qu2025daqfl}, network optimization \cite{wu2023qucomm}, etc.}   
\textcolor{black}{This approach not only scales computational capacity but also mitigates near-term hardware constraints by distributing the computational workload across multiple quantum processing units (QPUs), so that each node executes only a sub-circuit under its local qubit count and coherence limits \cite{wu2023qucomm, ferrari2023modular}.}
 By integrating quantum networking with distributed processing, DQC paves the way for solving computational tasks beyond the reach of classical high-performance computing.

Building on the foundation of DQC, distributed quantum learning (DQL) extends these capabilities to the machine learning domain, enabling collaborative training of quantum machine learning (QML) models across spatially separated devices. Analogous to distributed machine learning in the classical domain, DQL coordinates variational quantum circuits (VQCs) over multiple nodes, each processing local datasets. This decentralized training approach addresses challenges in data locality, enhances scalability, and enables cross-organization model training without direct data sharing. Applications span a broad range of domains, including drug discovery, financial modeling, climate prediction, and autonomous systems, where both quantum speedups and privacy preservation are crucial. 

\subsection{Motivation: Early Prototypes of DQC and DQL}
A recent study \cite{main2025distributed} demonstrated the first photonic-network-based DQC system, interconnecting two trapped-ion quantum processors via a two-metre optical link to execute non-local quantum gates. Using deterministic gate teleportation,  the system achieved high-fidelity remote entangling operations and successfully executed a distributed version of Grover’s search algorithm involving multiple non-local two-qubit gates, marking a significant milestone in performing large-scale quantum circuits across spatially separated processors without compromising connectivity. Complementing these advances, another recent work \cite{bhatia2024communication} showcased a communication-efficient DQL framework for multi-center healthcare, enabling multiple institutions to collaboratively train variational quantum models without sharing raw patient data. \textcolor{black}{The authors used quantum natural gradient descent (QNGD) for the client-side optimization of variational circuits in their distributed setup with limited qubits and compared it against standard gradient descent under unbalanced client data.} 
\textcolor{black}{Collectively, the successful hardware demonstration of non-local entanglement in \cite{main2025distributed} supports the {physical feasibility} of modular architectures in distributed quantum computation. In addition, the multi-site distributed training results in \cite{bhatia2024communication} demonstrate practical benefits in distributed quantum learning with data locality for privacy-sensitive settings and robustness under unbalanced client data.}


These pioneering efforts in DQC and DQL illustrate the feasibility and benefits of distributing quantum computation and learning tasks across heterogeneous nodes. However, they also reveal pressing challenges related to convergence behavior in noisy, resource-limited quantum networks and the necessity of post-quantum cryptographic safeguards against quantum-capable adversaries. A key practical challenge in DQL is that aggregating updates from all devices in every round is often inefficient, especially when client models vary significantly in quality. In such settings, system performance depends critically on parameters that must be systematically understood to ensure efficient and accurate learning. At the same time, as DQL systems scale over quantum networks, securing inter-node communication against quantum-capable adversaries becomes essential, requiring the integration of post-quantum cryptographic safeguards alongside quantum networking primitives. In addition to these challenges, portable quantum computers, such as in \cite{xu2021qubic, fruitwala2022distributed, feng2022spinq}, use field programmable gate array (FPGA) based System-on-Chip (SoCs) control boards, often no more powerful than a Raspberry Pi, to handle nearly all operations, including state preparation, control signal generation, and communication. \textcolor{black}{While quantum-native protocols such as quantum key distribution (QKD) offer robust security for model sharing and have achieved experimental milestones, they are far from general implementation for widespread usability. Such systems require dedicated, isolated hardware and precisely engineered control systems to operate due to their noise sensitivity. Thus, the cost of such a system increases substantially, which is justifiable only for high-sensitivity links, such as government, defense, and some financial infrastructure. As a result, the transition from existing security systems to QKD-based systems is expected to be gradual, with adoption concentrated in niche areas. Therefore, public-key cryptographic techniques, namely post-quantum cryptography (PQC), are the more practical approach for near-term implementation. Even a fully mature and scalable QKD system needs a robust authentication solution, which can be achieved via PQC-based signatures \cite{fedorov2023deploying}.} However, using PQC heavily on such devices is resource-intensive, in addition to their quantum processors being inherently noisy. \textit{\textbf{This paper addresses these challenges by presenting a novel integrated framework for convergence analysis under partial participation and a robust post-quantum security mechanism for DQL frameworks.}} In our proposed quantum-secure DQL (QS-DQL) framework, the noise challenge is addressed through fidelity-aware partial client participation, while the resource limitation in classical control boards for securing communication via PQC is mitigated by the dynamic security mechanism, making it the closest framework to practical implementation. 

\subsection{Related Work}
\textit{1) DQC}: Several studies have explored DQC from different perspectives, aiming to overcome the qubit limitations of current NISQ devices. Some works focus on quantum compilation frameworks that account for both network and device constraints, enabling efficient partitioning of quantum algorithms, optimizing remote operation scheduling, and minimizing entanglement consumption~\cite{ferrari2023modular,ferrari2021compiler}. Other studies investigate algorithm distribution strategies, such as parallelizing VQE computations across multiple quantum processors, formulating systematic methods for generating distributed quantum circuits, and proposing centralized or decentralized control architectures~\cite{diadamo2021distributed}. Moreover, research has addressed the qubit allocation problem in DQC, developing heuristic and hybrid optimization algorithms to map logical qubits to physical devices under various network topologies~\cite{mao2023qubit}. Furthermore, fault tolerance and security have been examined through enhanced quantum Byzantine agreement protocols, integrating error mitigation techniques and quantum key distribution-inspired verification mechanisms to improve operational reliability in adversarial settings~\cite{chen2025noise}. \textit{Collectively, these studies highlight the broad range of approaches advancing DQC in scalability, efficiency, and security, yet they do not analyze its convergence behavior and security threats.}

\textit{2) DQL}: Several studies have explored DQL to address scalability, heterogeneity, and privacy challenges in quantum-enhanced machine learning. \textcolor{black}{Authors in a foundational work \cite{chehimi2022quantum} proposed a quantum federated learning framework designed for distributed quantum data, allowing multiple nodes to collaboratively train an unknown quantum state classifier without sharing raw quantum information. Another similar work \cite{chen2021federated} introduced a framework for hybrid quantum-classical networks, which demonstrates that quantum classifiers can be effectively trained on classical image datasets in a distributed setting.} Distributed multi-agent reinforcement learning leverages VQCs to reduce trainable parameters and improve data representation \cite{chen2025quantum}, while auction-based client selection enhances trust and privacy using QNNs \cite{lee2025auction}. Consensus-based distributed quantum kernel learning has been applied to speech recognition for privacy-preserving scalability \cite{chen2025consensus}, and satellite-ground DQL frameworks use slimmable QNNs with superposition coding to improve efficiency \cite{park2024dynamic}. Some work also proposes distributed quantum long short-term memory models that partition VQCs for scalable sequence modeling \cite{chen2025toward}. Further advances include entanglement-controlled DQL for interference mitigation \cite{park2025entanglement}, and dynamic aggregation DQL for robust medical diagnosis under diverse data distributions \cite{qu2025daqfl}. These studies span DQL research in algorithm design, trust management, sequence modeling, communication optimization, and application-specific solutions for scalability, privacy, and robustness. \textit{However, these works largely overlook a unified convergence analysis of DQL under realistic settings that also incorporate adaptive security mechanisms.}

\textit{3) Security in DQC and DQL}: Numerous efforts have been made to secure distributed quantum computation and learning infrastructures. At the hardware level, isolation and scheduling mechanisms have been proposed to mitigate multi-tenant risks like crosstalk leakage \cite{lee2025auction}. Yet, this approach avoids insecure hardware components rather than securing them and requires physical modifications. At the model level, techniques such as random unitary encoders \cite{gong2024enhancing} and differential privacy have been investigated \cite{rofougaran2024federated}, but often at the cost of a decrease in model accuracy and noisier circuits. Regarding the communication layer, while quantum-native protocols like QKD and verifiable blind quantum computing \cite{sheng2017distributed, li2021quantum} offer robust theoretical security, they demand dedicated optical infrastructure and fault-tolerant capabilities that remain impractical for widespread deployment. \textcolor{black}{Recent insights emphasize that the coexistence of QKD and PQC enables scalable quantum infrastructure \cite{fedorov2023deploying}.}  This positions PQC as the essential practical layer for current networks, and conventional PQC has been benchmarked for integrity \cite{gong2024enhancing}. \textit{However, existing work lacks a pragmatic and adaptive framework to handle dynamic network threats.}

\subsection{Key Contributions}
Despite these research efforts, existing works do not address the convergence behavior of DQL systems under partial client participation, nor do they integrate post-quantum cryptography with adaptable security mechanisms and quantified resource overhead. This motivates us to develop a new theoretical DQL framework with a novel post-quantum security design. The main contributions of this paper are as follows: 
\begin{itemize}
    \item We analyze the convergence behavior of DQL under a partial device participation scenario for non-convex loss functions and non-independent and identically distributed (non-IID) data distributions. Furthermore, we derive a novel convergence bound that reveals a fundamental trade-off between convergence rate, the number of measurement shots, and the device subset size (Section \ref{systemmodel}).
    \item We design a multi-layered post-quantum cryptographic security architecture with a real-time dynamic security adaptation mechanism for model sharing powered by a dedicated QNN that monitors channel conditions, detects threat levels, and adjusts PQC settings across three National Institute of Standards and Technology (NIST)-compliant security levels. The proposed dynamic security mechanism is evaluated on real hardware equivalents serving as the classical control structure in quantum computers, profiling the reduction in resource overhead achieved by our approach (Section \ref{sec:security}).
    \item Extensive numerical experiments show that our theoretical findings align with simulation results, confirming the identified trade-off, and further demonstrate the effectiveness of our PQC-based dynamic security mechanism in enhancing protection while maintaining efficient DQL performance (Section \ref{experimentsandresults}).
\end{itemize}

\section{System Model} \label{systemmodel}
QS-DQL system considers a distributed learning framework \textcolor{black}{(Fig.~\ref{fig:overview_Jsac}), where a central quantum server} coordinates with a network of distributed quantum nodes to collaboratively train a global model. Let $\mathcal{U} = \{1, 2, \dots, U\}$, $\mathcal{T} = \{1, 2, \dots, T\}$, and $\mathcal{K} = \{1, 2, \dots, K\}$ denote the sets of devices, local iterations, and global rounds, respectively. In each global round, \textit{a subset of devices} performs local training on their private datasets before communicating their updated model parameters to the server for aggregation. \textcolor{black}{The communication is protected by an adaptive post-quantum cryptographic mechanism that utilizes lattice-based algorithms to resist potential quantum adversaries. A threat detection model on the server dynamically selects the required cryptographic suite, parameter set, and re-keying frequency for that round's transmission based on real-time threat level.}
\begin{figure}
    \centering
    \includegraphics[width=0.90\linewidth]{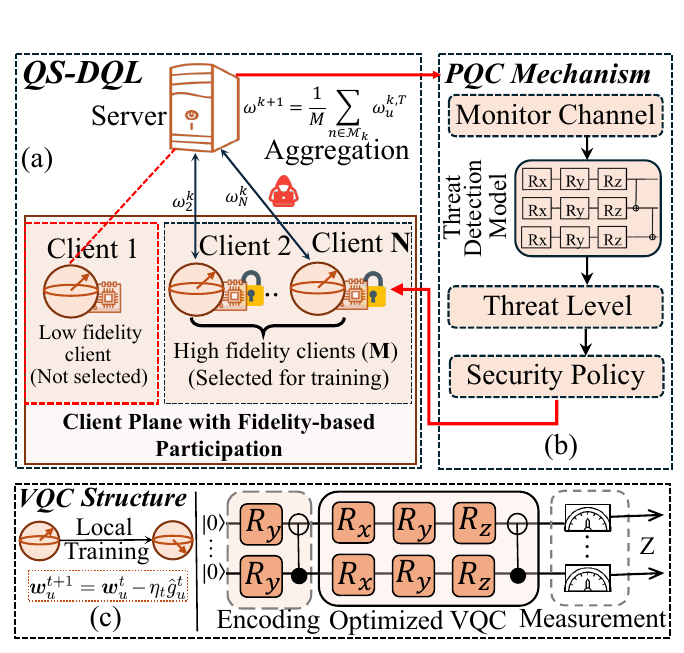}
    \caption{Overview of the QS-DQL framework with (a) fidelity-based partial quantum client participation for collaborative training where $M$ clients have been selected, (b) adaptive post-quantum cryptographic mechanism in the presence of a threat, and (c) local VQC training procedure.}
    \label{fig:overview_Jsac}
\end{figure}

\subsection{Baseline DQL Framework}
The process on each client $u \in \mathcal{U}$ begins with state preparation. Classical input data is first processed by classical neural network layers for dimensionality reduction and then encoded into a $q$-qubit state, initialized in the ground state $\lvert 0 \rangle$. This encoded state is then passed through a VQC, which is the core of the client's QNN.

The VQC applies a parameterized unitary transformation $U(\boldsymbol{w_{u}})$ on the $q$ qubits, where the circuit is parameterized by $\boldsymbol{w_{u}} = (\boldsymbol{w}_{u}^{1}, \dots, \boldsymbol{w}_{u}^{P})$, with $P$ representing the total number of trainable parameters. The specific VQC design, or ansatz, defines the unitary structure as $U(\boldsymbol{w}_{u}) = \prod_{p=1}^{P} U_{p} (\boldsymbol{w}_{u}^{p}) V_{p}$
where each parameter $\boldsymbol{w}_{u}^{p} \in \mathbb{R}$ controls a corresponding unitary gate $U_{p} (\boldsymbol{w}_{u}^{p})$ given by $U_{p} (\boldsymbol{w}_{u}^{p}) = e^{-i\frac{\boldsymbol{w}_{u}^{p}}{2}G_{p}}$,
with $G_p \in \{I, X, Y, Z\}^{\otimes q}$ denoting the Pauli string generator. \textcolor{black}{The unitary $V_{p}$ is a fixed, non-trainable gate.} The VQC produces the final parameterized state for client $u$ as ${\lvert \Psi_{u}(\boldsymbol{w}_{u}) \rangle} = U(\boldsymbol{w}_{u}) \lvert 0 \rangle$. The corresponding pure-state density matrix is $\Psi_{u}(\boldsymbol{w}_{u}) = \lvert \Psi_{u}(\boldsymbol{w}_{u}) \rangle \langle \Psi_{u}(\boldsymbol{w}_{u}) \rvert$.

The objective of the local training is to find the parameter vector $\boldsymbol{w}_{u}$ that minimizes a local loss function, $f_{u}(\boldsymbol{w}_{u})$ as
\begin{align}
    \min_{\boldsymbol{w}_{u} \in \mathbb{R}^{P}} \left\{ f_{u}(\boldsymbol{w}_{u}) := \mathcal{C}(\langle Z \rangle_{\lvert \Psi_{u}(\boldsymbol{w}_{u}) \rangle} , y)\right\}, \label{eqn5}
\end{align}
where $\mathcal{C}(\cdot)$ is a task-specific cost function comparing the model's output with the target label $y$. The model's output is the expectation value of an observable $Z$, computed as $\langle Z \rangle_{\lvert \Psi_{u}(\boldsymbol{w}_{u}) \rangle} = \langle 0 \rvert U^\dagger(\boldsymbol{w}_{u}) Z U(\boldsymbol{w}_{u}) \lvert 0 \rangle = \operatorname{Tr}(Z \Psi_{u}(\boldsymbol{w}_{u}))$. Since $Z$ is Hermitian, it has an eigendecomposition $Z = \sum_{j=1}^{N_z} h_j \Pi_j$, allowing the loss $f_{u}(\boldsymbol{w}_{u})$ to be expressed as
$
     \sum_{j=1}^{N_z} h_j \operatorname{Tr}(\Pi_j \Psi_{u}(\boldsymbol{w}_{u})).
$
During training, this value is estimated empirically from $H$ repeated measurements on the quantum state, yielding samples $\{ Z_{h} \}_{h=1}^{H}$. The empirical estimate of the loss is given by
\begin{align}
    \hat{f}_{u}(\boldsymbol{w}_{u}) = \hat{\langle Z \rangle}_{\lvert \Psi_{u}(\boldsymbol{w}_{u}) \rangle} = \frac{1}{H} \sum_{h=1}^{H} Z_h, \label{eqn10}
\end{align}
where $\hat{(\cdot)}$ denotes the empirical estimate. The local model parameters are then updated using stochastic gradient descent (SGD). For a client $u$ at local iteration $t$, the update rule is expressed as
$
    \boldsymbol{w}_{u}^{t+1} = \boldsymbol{w}_{u}^{t} - \eta_{t} \hat{g}_{u}^{t},
$
with $\eta_{t} > 0$ representing the learning rate and $\hat{g}_{u}^{t}$ being the stochastic gradient estimate.

After each client completes its $T$ local training iterations, the updated local models are transmitted to the central server. The server's goal is to optimize a global loss function, $F(\boldsymbol{w}) = \sum_{u=1}^{U} p_u f_u(\boldsymbol{w})$. To achieve this, the server aggregates the received local models using a federated averaging algorithm to produce the new global model for the next round, $k+1$, given by $\boldsymbol{w}^{k+1} = \frac{1}{M} \sum_{n \in \mathcal{M}_k} \boldsymbol{w}_{u}^{k,T}$. This new global model $\boldsymbol{w}^{k+1}$ is then broadcast to the devices to begin the next global round of training.

\subsection{Fidelity-Aware Client Participation}
In a practical quantum setting, local model updates are not of uniform quality due to hardware noise and calibration drift. This noise primarily manifests as variance in the stochastic gradient estimate, $\Tilde{g}_{u,k}^{t}$. As established in our convergence analysis, this uncertainty is captured by the shot noise variance term, $\text{var}(\xi_{k}^{t})$, which is upper bounded as
\begin{equation}
\text{var}(\xi_{k}^{t}) \leq \frac{1}{U} \sum_{u \in \mathcal{U}} \frac{\nu N_{z} D \operatorname{Tr}(Z^{2})}{2H}. 
\end{equation}
Since devices can exhibit different effective noise levels (captured by parameters like $\nu$) and may use a different number of measurement shots, $H$, their update fidelity varies. To mitigate this, our framework incorporates a fidelity-aware partial-participation strategy, as shown in Algorithm~\ref{algorithm2}. \textcolor{black}{After each local training round $k$, every client $u \in \mathcal{U}$ reports a vector of fidelity metrics, $\boldsymbol{\mathcal{F}}_{u,k} \in \mathbb{R}^D$, which captures the quality and variance of its gradient estimates. This metric is derived directly from the measurement shot budget $H$ used during the parameter-shift rule \cite{wierichs2022general}, utilizing the theoretical property that estimator variance scales inversely with $H$. Expectation values required for the gradients are estimated as empirical means over $H$ repeated shots. Thus, a noise estimate without requiring additional quantum circuit executions just for variance calculation can be obtained. This fidelity metric is a low-dimensional scalar whose transmission overhead is negligible relative to that of the high-dimensional model weights.} The server then applies a selection function, $\text{Select}(\cdot)$, to choose the subset of $M$ devices with the highest fidelity,
$
\mathcal{M}_k = \text{Select}(\{\boldsymbol{\mathcal{F}}_{u,k}\}_{u \in \mathcal{U}}),
$
where $|\mathcal{M}_k| = M$. \textcolor{black}{As this process only involves a standard sorting-based ranking step, it is computationally inexpensive, with a complexity of $O(N \log N)$. Therefore, the additional latency introduced by the server-side selection is minimal. In practice, it is negligible compared to the dominant costs of local training and model communication in typical high-speed network settings.} This dynamic selection process ensures that only the most reliable updates, effectively those with the lowest gradient variance, contribute to the global model.
\begin{algorithm}[ht!]
  \caption{DQL Algorithm with Fidelity-Aware Partial Aggregation}
  \label{algorithm2}
\begin{algorithmic}[1]
 \footnotesize
    \State \textbf{Input:} learning rate $\eta_{k}$, initial global model $\boldsymbol{w}_{0}$, number of local updates $T$, number of selected devices $M$
    \State \textbf{for} $k = 1, 2, \dots, K$ \textbf{do:}
        \State \indent \textbf{Server broadcasts} $\boldsymbol{w}_{k}$ to all devices
        \State \indent \textbf{Parallel for all devices} $u \in \mathcal{U}$ \textbf{do:}
            \State \indent \indent Initialize $\boldsymbol{w}_{u,k}^{0} = \boldsymbol{w}_{k}$
            \State \indent \indent \textbf{for} $t = 0, 1, \dots, T-1$ \textbf{do:}
                \State \indent \indent \indent $\boldsymbol{w}_{u,k}^{t+1} = \boldsymbol{w}_{u,k}^{t} - \eta_{k} \Tilde{g}_{u,k}^{t}$
            \State \indent \indent \textbf{end for}
            \State \indent \indent Each client sends updated model $\boldsymbol{w}_{u,k}^{T}$ to the server
        \State \indent \textbf{end parallel for}
        \State \indent \textbf{Server evaluates fidelity of all received models:}
            \State \indent \indent Compute fidelity scores $\{\mathcal{F}_{u}\}_{u \in \mathcal{U}}$
        \State \indent \textbf{Server applies sorting on fidelity scores and model updates}
            \State \indent \indent Select subset $\mathcal{M} \subset \mathcal{U}$ of $M$ devices with high fidelity
        \State \indent \textbf{Server aggregates:}
            \State \indent \indent $\bar{\boldsymbol{w}}_{k+1} = \frac{1}{M} \sum_{m \in \mathcal{M}} \boldsymbol{w}_{m,k}^{T}$
        \State \indent \textbf{Server broadcasts} $\bar{\boldsymbol{w}}_{k+1}$ to all devices
    \State \textbf{end for}
    \State \textbf{Output:} final global model $\bar{\boldsymbol{w}}_{K}$
\end{algorithmic}
\end{algorithm}

\noindent
\textbf{\textit{Complexity Analysis}:}
We start with the inequality $f_{u}(\boldsymbol{w}) \leq f_{u}(\boldsymbol{w'}) + \nabla {f_{u}(\boldsymbol{w}')}^{T}(\boldsymbol{w}-\boldsymbol{w'}) + \frac{L}{2}||\boldsymbol{w}-\boldsymbol{w'}||^{2}$ along with the gradient bound $||\nabla f_{u}(\boldsymbol{w})||^{2} \geq 2 \mu (f_{u}(\boldsymbol{w}) - {f_{u}}^{*})$. \textcolor{black}{Here, $\mu$ denotes the Polyak-Lojasiewicz (PL) constant (refer to Assumption \ref{Assumption2}). This condition relaxes the strict convexity assumption for non-convex landscapes and characterizes the geometry by ensuring that the gradient magnitude remains proportional to the sub-optimality gap.} It depends on the circuit size, specifically the number of qubits ($n$) and the number of parameters ($P$).

\textcolor{black}{In general, larger circuits tend to produce smaller gradient norms. This effect is due to the barren plateau phenomenon, where gradients tend to vanish exponentially via flattening of the optimization landscape \cite{mcclean2018barren}.} Using standard SGD convergence results from \cite{ajalloeian2020convergence}, for any learning rate $\eta \leq \frac{1}{L}$, the expected loss satisfies $\mathbb{E}[f_{u}(\boldsymbol{w}^{T})] - {L}^{*} \leq (1 - \eta \mu)^{T} ([f_{u}(\boldsymbol{w}^{0})] - {L}^{*}) + \frac{1}{2} [\frac{\eta L V}{\mu}]$, where $V = \frac{\nu N_{z} D Tr(Z^{2})}{2H}$ is an upper bound on the variance of the gradient estimate due to shot noise. To achieve an error of $\mathcal{O}(\delta)$, it is sufficient to select $\eta \leq \min\{\frac{1}{L},\frac{\delta \mu}{L V}\}$ and run $T^{\text{shot-noise}} = \mathcal{O} \bigg( \log\frac{1}{\delta} + \frac{V}{\delta \mu} \bigg) \frac{L}{\mu}$ iterations. 
\textcolor{black}{
This establishes that local model training converges to a persistent error floor of order $\mathcal{O}(\eta L V)$. Crucially, this creates a unique trade-off. Unlike classical systems where variance vanishes with batch size, here $V$ is constrained by the shot count $H$ which is fundamentally quantum. Furthermore, the error floor can be made arbitrarily small by choosing a sufficiently small learning rate.}

\subsection{Convergence Analysis}

We first present \textbf{\textit{notations and definitions}} used for our convergence analysis. We consider the global optimization problem:
\begin{align}
    \min_{\boldsymbol{w}} f(\boldsymbol{w}) \triangleq \sum_{u=1}^{U} f_{u}(\boldsymbol{w}), \label{eqn23}
\end{align}
where $f(\boldsymbol{w})$ is the global objective function. Each client $u$ trains on a local dataset $\mathcal{S}_{u}$ with $S_{u}$ samples from distribution $\mathcal{D}_{u}$. The full local gradient is defined as $g_{u} = \frac{1}{|\mathcal{S}_{u}|} \nabla f_{u}(\boldsymbol{w})\stackrel{\triangle}{=} \frac{1}{|\mathcal{S}_{u}|} \nabla f(\boldsymbol{w};\mathcal{S}_{u})$ and the stochastic gradient as $\Tilde{g_{u}} \stackrel{\triangle}{=} \frac{1}{B} \nabla f(\boldsymbol{w};\xi_{u})$, where $\xi_{u} \subseteq \mathcal{S}_{u}$ is mini-batch of size $B$. At local iteration $t$ of global round $k$, client $u$'s parameters and gradients are $\boldsymbol{w}_{u,k}^{t}$ and $g_{u,k}^{t}$, respectively. We define:
\begin{align}
    \bar{\boldsymbol{w}}_{k}^{t} \stackrel{\triangle}{=} \frac{1}{U} \sum_{u \in \mathcal{U}} \boldsymbol{w}_{u,k}^{t},
    \quad
    \Tilde{g}_{k}^{t} \stackrel{\triangle}{=} \frac{1}{U} \sum_{u \in \mathcal{U}} \Tilde{g}_{u,k}^{t},
\end{align}
The local SGD update is
$
    \boldsymbol{w}_{u,k}^{t+1} = \boldsymbol{w}_{u,k}^{t} - \eta_{k} \Tilde{g}_{u,k}^{t}. \label{eqn30}
$
Since $\mathbb{E} \Tilde{g}_{k}^{t} = g_{k}^{t}$, the global update is an unbiased estimate of the true gradient. Finally, we assume gradient diversity is bounded by $\lambda$, i.e.,
\begin{align}
    \frac{\sum_{u=1}^{U} ||g_{u,k}^{t}||_{2}^{2}}{||\sum_{u=1}^{U}g_{u,k}^{t}||_{2}^{2}} \leq \lambda. \label{eqn31}
\end{align}
Next, we outline the key \textbf{\textit{assumptions}} that form the basis of our convergence analysis.
\begin{assumption}[Smoothness and Lower Boundedness] \label{Assumption1}
\textit{The local objective function $f_{u}(.)$ associated with device $u$ is differentiable for $1 \leq u \leq U$ and is $L-smooth$, i.e., $||\nabla f_{u}(\mathbf{u}) - \nabla f_{u}(\mathbf{v})|| \leq L||\mathbf{u}-\mathbf{v}||, \forall{\mathbf{u},\mathbf{v}} \in \mathbb{R}^{d}$.}
\end{assumption}
\begin{assumption}[$\mu$-Polyak-Lojasiewicz (PL)] \label{Assumption2}
\textit{The global objective function $f(.)$ is differentiable and satisfy the Polyak-Lojasiewicz (PL) condition with constant $\mu$, i.e., $\frac{1}{2} ||\nabla f(\boldsymbol{w})||_{2}^{2} \geq \mu (f(\boldsymbol{w})-f({\boldsymbol{w}}^{*}))$ holds $\forall \boldsymbol{w} \in \mathbb{R}^{d}$ with ${\boldsymbol{w}}^{*}$ being the optimal solution of global objective.}
\end{assumption}
\begin{assumption}[Bounded Local Variance] \label{Assumption3}
\textit{For every local dataset $S_{u}$, $u = 1, 2, \dots, U$, we can sample an independent mini-batch $\xi_{u} \subseteq \mathcal{S}_{u}$ with $|\xi_{u}| = B$ and compute an unbiased stochastic gradient $\Tilde{g_{u}} = \frac{1}{B} \nabla f(\boldsymbol{w};\xi_{u})$, $\mathbb{E}[\Tilde{g_{u}}] = g_{u} = \frac{1}{|\mathcal{S}_{u}|} \nabla f(\boldsymbol{w};\mathcal{S}_{u})$ with the variance bounded as \cite{jose2022error}
\begin{align}
    \mathbb{E}[||\Tilde{g_{u}} - g_{u}||^{2}] \leq C_{1} ||g_{u}||^{2} + \frac{\sigma^{2}}{B} + \text{var}(\xi_{k}^{t}). 
\end{align}
where $C_1$ is a non-negative constant and inversely proportional to the mini-batch size, $\sigma$ is another constant controlling the variance bound, and $\text{var}(\xi_{k}^{t})$ is the variance of the gradient estimate.}
\end{assumption}
The left-hand side captures the total uncertainty in the stochastic gradient, stemming from (i) mini-batch sampling variance and (ii) shot noise variance, $\text{var}(\xi_{k}^{t})$, where
$
    \Tilde{g}_{u,k}^{t} = g_{u,k}^{t} + \xi_{u,k}^{t}. \label{eqn10'''}
$
Here, $\Tilde{g}_{u,k}^{t}$ is the estimated gradient, $g_{u,k}^{t}$ the true gradient, and $\xi_{u,k}^{t}$ the shot-induced error, capturing both classical non-IID and measurement uncertainties. 
\noindent
Using Assumption \ref{Assumption1} and the update rule in \eqref{eqn30}, we obtain
\begin{align}
    f(\bar{\boldsymbol{w}}_{k}^{t+1}) - f(\bar{\boldsymbol{w}}_{k}^{t}) \leq - \eta_{k} \langle \nabla f(\bar{\boldsymbol{w}}_{k}^{t}), \Tilde{g}_{k}^{t} \rangle + \frac{\eta_{k}^{2}  L}{2} ||\Tilde{g}_{k}^{t}||^{2}. \label{eqn37new}
\end{align}
Taking expectations on both sides gives
\begin{align}
    \mathbb{E}[f(\bar{\boldsymbol{w}}_{k}^{t+1}) - f(\bar{\boldsymbol{w}}_{k}^{t})] \leq -\eta_{k} \mathbb{E}[\langle \nabla f(\bar{\boldsymbol{w}}_{k}^{t}), \Tilde{g}_{k}^{t} \rangle] + \frac{\eta_{k}^{2}  L}{2} \mathbb{E}[||\Tilde{g}_{k}^{t}||^{2}]
\end{align}
Averaging over all global rounds and local iterations yields
\begin{align}
    &\frac{1}{KT} \sum_{k=1}^{K} \sum_{t=1}^{T} \mathbb{E}[f(\bar{\boldsymbol{w}}_{k}^{t+1}) - f(\bar{\boldsymbol{w}}_{k}^{t})] \nonumber \\
    & \hspace{6em}\leq \frac{1}{KT} \sum_{k=1}^{K} \sum_{t=1}^{T} (-\eta_{k} \mathbb{E}[\langle \nabla f(\bar{\boldsymbol{w}}_{k}^{t}), \Tilde{g}_{k}^{t} \rangle]) \nonumber \\
    & \hspace{6em}+ \frac{1}{KT} \sum_{k=1}^{K} \sum_{t=1}^{T} \frac{\eta_{k}^{2}  L}{2} \mathbb{E}[||\Tilde{g}_{k}^{t}||^{2}]. \label{eqn35}
\end{align}
We derive bounds for the right-hand-side of \eqref{eqn35}: Lemmas~\ref{lemma1}–\ref{lemma2} bound the first term, and Lemma~\ref{lemma3} bounds the second, together fully characterizing \eqref{eqn35}.

\textit{We implement partial client participation via fidelity-aware aggregation}; in each round, the server measures the fidelity of all received models, groups similar updates and selects a subset $\mathcal{M}$ ($|\mathcal{M}| = M \leq U$) with the highest fidelities, and aggregates them via federated averaging to update the global model until convergence, addressing \eqref{eqn23}. Specifically, at the end of $T$ local iterations, the server selects a subset $\mathcal{M}$ of devices for aggregation. Here, $g_{u,k}^{t} = \frac{1}{|\mathcal{S}_{u}|} \nabla f_{u}(\boldsymbol{w}_{u,k}^{t})\stackrel{\triangle}{=} \frac{1}{|\mathcal{S}_{u}|} \nabla f(\boldsymbol{w}_{u,k}^{t};\mathcal{S}_{u})$ and $\Tilde{g}_{u,k}^{t} \stackrel{\triangle}{=} \frac{1}{B} \nabla f(\boldsymbol{w}_{u,k}^{t};\xi_{u})$. Hence, $\bar{\mathbf{v}}_{k}^{t+1} = \bar{\boldsymbol{w}}_{k}^{t} - \eta_{k} \Tilde{g}_{k}^{t}$ and $\mathbb{E} \Tilde{g}_{k}^{t} = g_{k}^{t}$. The update rule for partial participation is
\begin{align}
     \mathbf{v}_{u,k}^{t+1} = \boldsymbol{w}_{u,k}^{t} - \eta_{k} \Tilde{g}_{u,k}^{t},
\end{align}
where $\boldsymbol{w}_{u,k}^{t+1} =  \mathbf{v}_{u,k}^{t+1} \text{for any local round } t$, and $\text{samples } \mathcal{M} \text{ and averages } \{\mathbf{v}_{u,k}^{T}\}_{u \in \mathcal{M}} \text{for any global round } k$.

For partial client participation, $\bar{\boldsymbol{w}}_{k}^{t+1} = \bar{\mathbf{v}}_{k}^{t+1}$ holds in expectation, requiring an unbiased sampling and aggregation scheme. 
\begin{align}
    \mathbb{E}_{\mathcal{M}} \bar{\boldsymbol{w}}_{k+1} = \bar{\mathbf{v}}_{k+1}, \qquad \bar{\boldsymbol{w}}_{k+1} = \frac{1}{M} \sum_{u \in \mathcal{M}} \mathbf{v}_{u,k+1}.
\end{align}
Here, $\mathcal{M}$ is sampled with replacement using probabilities $q_{1}, q_{2}, \dots, q_U$. Lemma \ref{lemma5} shows unbiasedness, and Lemma \ref{lemma6} bounds the variance of $\bar{\boldsymbol{w}}_{u,k}^{t}$.

\subsection{Lemmas}
The following lemmas form the basis for establishing the main convergence rate.
\begin{lemma} \label{lemma1}
Let Assumption \ref{Assumption1} hold, the expected inner product between stochastic gradient and full gradient is bounded by
\begin{align}
    & -\eta_{k} \mathbb{E}\bigg[\langle \nabla f(\bar{\boldsymbol{w}}_{k}^{t}), \Tilde{g}_{k}^{t} \rangle\bigg] \leq -\frac{\eta_{k}}{2}||\nabla f(\bar{\boldsymbol{w}}_{k}^{t})||^2 \nonumber \\
    & - \frac{\eta_{k}}{2}||\sum_{u=1}^{U}\nabla f_{n}(\boldsymbol{w}_{u,k}^{t})||^{2} + \frac{\eta_{k} L^{2}}{2} \sum_{u=1}^{U} ||\bar{\boldsymbol{w}}_{k}^{t} - \boldsymbol{w}_{u,k}^{t}||^{2}.
\end{align}
\end{lemma}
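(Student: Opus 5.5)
We plan to use the standard polarization identity for the inner product after replacing the stochastic gradient by its conditional mean. First we condition on the history up to local iteration $t$ of round $k$. By Assumption~\ref{Assumption3}, and the unbiasedness $\mathbb{E}\Tilde{g}_{k}^{t} = g_{k}^{t}$ noted above, the shot noise $\xi_{u,k}^{t}$ and the mini-batch noise both have zero mean. Hence
\begin{align}
\mathbb{E}\big[\langle \nabla f(\bar{\boldsymbol{w}}_{k}^{t}), \Tilde{g}_{k}^{t}\rangle\big] = \langle \nabla f(\bar{\boldsymbol{w}}_{k}^{t}), g_{k}^{t}\rangle,
\end{align}
where $g_{k}^{t}$ collects the full local gradients $\nabla f_u(\boldsymbol{w}_{u,k}^{t})$ evaluated at the \emph{local} iterates. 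We will use the scaling implied by the right-hand side of the statement: the global objective is the plain sum \eqref{eqn23}, so $\nabla f(\bar{\boldsymbol{w}}_{k}^{t}) = \sum_u \nabla f_u(\bar{\boldsymbol{w}}_{k}^{t})$. We identify the aggregated gradient with $\sum_u \nabla f_u(\boldsymbol{w}_{u,k}^{t})$, absorbing the $1/|\mathcal{S}_u|$ and $1/U$ normalizations into $f_u$.

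Next we apply $2\langle a,b\rangle = \|a\|^2 + \|b\|^2 - \|a-b\|^2$ with $a = \nabla f(\bar{\boldsymbol{w}}_{k}^{t})$ and $b = \sum_u \nabla f_u(\boldsymbol{w}_{u,k}^{t})$. Multiplying by $-\eta_k$ gives
\begin{align}
-\eta_k\langle a,b\rangle = -\frac{\eta_k}{2}\|a\|^2 - \frac{\eta_k}{2}\|b\|^2 + \frac{\eta_k}{2}\|a-b\|^2 .
\end{align}
The first two terms are exactly the two negative terms in the statement.

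It remains to bound the discrepancy term
\begin{align}
\|a-b\|^2 = \Big\|\sum_u \big(\nabla f_u(\bar{\boldsymbol{w}}_{k}^{t}) - \nabla f_u(\boldsymbol{w}_{u,k}^{t})\big)\Big\|^2 .
\end{align}
The plan is to apply Jensen/Cauchy--Schwarz to move the norm inside the sum, and then Assumption~\ref{Assumption1} ($L$-smoothness of each $f_u$) termwise. This gives $\|\nabla f_u(\bar{\boldsymbol{w}}_{k}^{t}) - \nabla f_u(\boldsymbol{w}_{u,k}^{t})\|^2 \le L^2\|\bar{\boldsymbol{w}}_{k}^{t} - \boldsymbol{w}_{u,k}^{t}\|^2$, which produces the term $\frac{\eta_k L^2}{2}\sum_u\|\bar{\boldsymbol{w}}_{k}^{t}-\boldsymbol{w}_{u,k}^{t}\|^2$.

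The main obstacle is the constant in this Jensen step. For a sum of $U$ vectors, $\|\sum_u x_u\|^2 \le U\sum_u\|x_u\|^2$, which leaves an extra factor $U$ that the statement does not show. We would first try to remove it by working with averaged quantities. If $f$ and $g_{k}^{t}$ are read as averages over $\mathcal{U}$, the Jensen bound $\|\frac{1}{U}\sum_u x_u\|^2 \le \frac1U\sum_u\|x_u\|^2$ carries no factor $U$. Its residual $1/U$ can be absorbed into the normalization of $f_u$ described in the first paragraph, giving the displayed form. If the normalizations cannot be made consistent, we would state the lemma with the factor $U$, or equivalently $\frac1U$ in front of the last sum, and carry that constant into Lemma~\ref{lemma2}. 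The rest of the argument is unchanged.

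Finally we take total expectation; both sides are already deterministic given the history, so the bound holds unconditionally.
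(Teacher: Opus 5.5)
Your argument is the paper's argument. Its proof replaces $\tilde g_k^t$ by its conditional mean (first step) and then writes down the displayed bound ``from Assumption 1'' (second step). That second step is your polarization identity followed by termwise $L$-smoothness, and the paper never discusses the Jensen constant. So the factor $U$ you found is a real hole, shared by the paper.

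Your repair does not close it. Replacing every $f_u$ by $c f_u$ multiplies $L$ by $|c|$ and every term of the inequality by $c^2$, so no renormalization of the $f_u$ changes a relative constant. What matters is how $f$ and $\mathbb{E}\tilde g_k^t$ are tied to $\sum_u \nabla f_u$, and each choice just moves the $U$:
\begin{itemize}
\item Reading both as averages makes the Jensen constant $\frac1U$. But the middle term becomes $-\frac{\eta_k}{2}\|\frac1U\sum_u\nabla f_u(\boldsymbol w_{u,k}^t)\|^2$, a factor $U^2$ smaller than the displayed one.
\item With the paper's literal conventions ($f=\sum_u f_u$ in \eqref{eqn23}, but $\tilde g_k^t=\frac1U\sum_u\tilde g_{u,k}^t$, taking $g_{u,k}^t=\nabla f_u(\boldsymbol w_{u,k}^t)$), the same steps give $-\frac{\eta_k}{2U}\|\nabla f(\bar{\boldsymbol w}_k^t)\|^2-\frac{\eta_k}{2U}\|\sum_u\nabla f_u(\boldsymbol w_{u,k}^t)\|^2+\frac{\eta_k L^2}{2}\sum_u\|\bar{\boldsymbol w}_k^t-\boldsymbol w_{u,k}^t\|^2$. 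The $U$ simply migrates to the negative terms.
\end{itemize}
Your ``identification'' of $g_k^t$ with the plain sum is one of these rescalings, not a harmless convention.

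In fact no reading rescues the displayed inequality, because it is false. Take $U=2$, $f_1(\boldsymbol w)=\frac L2\|\boldsymbol w\|^2$ and $f_2(\boldsymbol w)=-\frac L2\|\boldsymbol w\|^2$, both $L$-smooth. Set $\bar{\boldsymbol w}_k^t=\boldsymbol 0$, $\boldsymbol w_{1,k}^t=\boldsymbol d$, $\boldsymbol w_{2,k}^t=-\boldsymbol d$. Then $\nabla f(\bar{\boldsymbol w}_k^t)=\boldsymbol 0$ for any weighting of the $f_u$ in $f$. So the left side is $0$ whatever the noise in $\tilde g_k^t$, while the right side is $-\frac{\eta_k}{2}\cdot 4L^2\|\boldsymbol d\|^2+\frac{\eta_k L^2}{2}\cdot 2\|\boldsymbol d\|^2=-\eta_k L^2\|\boldsymbol d\|^2<0$.

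If you want convex, bounded-below $f_u$ and a PL global objective, take $U=6$ on $\mathbb{R}$ with $f_1=f_2=f_3=\frac L2 w^2$ and $f_4=f_5=f_6=0$. Set $\bar w=0$, $w_1=w_2=w_3=1$, $w_4=w_5=w_6=-1$. This gives left side $0$ and right side $-\frac32\eta_k L^2$.

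So drop the absorption step and prove your fallback. Either use the plain-sum version with $\frac{\eta_k U L^2}{2}$ in front of the last sum, which is tight on the first example. Or, equivalently, use the averaged version with $\|\frac1U\sum_u\nabla f_u(\boldsymbol w_{u,k}^t)\|^2$ in the middle and $\frac{\eta_k L^2}{2U}$ in front of the last sum.

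The corrected constant must then be carried into the proof of Theorem~1, where Lemma~\ref{lemma1} is combined with Lemmas~\ref{lemma4} and~\ref{lemma6}. Lemma~\ref{lemma2} is the sampling-unbiasedness statement and does not use it.
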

\begin{proof}
See Section  \ref{prooflemma1} in Appendix. \renewcommand{\qedsymbol}{}
\end{proof}

\begin{lemma} \label{lemma2}
For any global averaging round $k$
\begin{align}
    \mathbb{E}_{\mathcal{M}} (\bar{\boldsymbol{w}}_{k+1}) = \bar{\mathbf{v}}_{k+1}. \label{eqn70}
\end{align}
\end{lemma}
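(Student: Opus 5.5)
The plan is to prove \eqref{eqn70} directly from the sampling model stated just before the lemmas: the aggregation set $\mathcal{M}$ is drawn with replacement as $M$ i.i.d. indices $u_1,\dots,u_M$, each index taking value $u$ with probability $q_u$. Write $\bar{\boldsymbol{w}}_{k+1} = \frac{1}{M}\sum_{m=1}^{M} \mathbf{v}_{u_m,k+1}$. The local models $\{\mathbf{v}_{u,k+1}\}_{u\in\mathcal{U}}$ are produced by local SGD before the server samples. We therefore condition on them and take the expectation only over the randomness of $\mathcal{M}$.

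First, apply linearity of expectation and the identical distribution of the draws:
\begin{align}
\mathbb{E}_{\mathcal{M}}\bar{\boldsymbol{w}}_{k+1} = \frac{1}{M}\sum_{m=1}^{M}\mathbb{E}[\mathbf{v}_{u_m,k+1}] = \sum_{u=1}^{U} q_u \mathbf{v}_{u,k+1}.
\end{align}
Second, match this to $\bar{\mathbf{v}}_{k+1}$. Since $\bar{\mathbf{v}}$ is defined as the uniform average $\frac{1}{U}\sum_u \mathbf{v}_{u,k+1}$, consistent with $\bar{\boldsymbol{w}}_k^t$, the lemma holds by taking $q_u = 1/U$. For the weighted objective $F=\sum_u p_u f_u$, the analogous choice is $q_u=p_u$ with $\bar{\mathbf{v}}=\sum_u p_u\mathbf{v}_u$. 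Either way the identity is exact, and the argument holds for every round $k$, which gives \eqref{eqn70}.

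The main obstacle is reconciling this with the fidelity-aware selection rule. Taking the top-$M$ devices by fidelity is a deterministic, data-dependent choice. It is not an unbiased sample, so \eqref{eqn70} cannot hold literally for that rule. I would handle this by stating explicitly that the lemma concerns the idealized with-replacement sampling scheme with probabilities $q_u$ matching the aggregation weights. Fidelity scores would then affect only which $q_u$ are used, not the unbiasedness. The deviation caused by deterministic top-$M$ selection would be treated as an additional bias term outside this lemma, or the fidelity filtering would be modeled as reweighting the objective itself. A further subtlety is independence: the conditioning must hold the $\mathbf{v}_{u,k+1}$ fixed with respect to the sampling randomness. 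This holds as long as the sampling probabilities do not depend on the shot-noise realizations, an assumption I would state explicitly.
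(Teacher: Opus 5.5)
Your argument is correct and is essentially the paper's proof. The paper also treats the local models $\{\mathbf{v}_{u,k+1}\}$ as a fixed sequence and uses linearity over the $M$ with-replacement draws to get $\mathbb{E}_{\mathcal{M}}\bigl[\sum_{u\in\mathcal{M}} \mathbf{v}_{u,k+1}\bigr] = M\sum_{u=1}^{U} q_u \mathbf{v}_{u,k+1}$, then calls the lemma immediate. You additionally make explicit what the paper leaves implicit: $q_u$ must match the averaging weights ($q_u = 1/U$ for the uniform $\bar{\mathbf{v}}_{k+1}$), and, as you rightly flag, the deterministic top-$M$ fidelity selection of Algorithm~\ref{algorithm2} is not covered by this unbiasedness identity.
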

\begin{proof}
See Section \ref{prooflemma2} in Appendix. \renewcommand{\qedsymbol}{}
\end{proof}

\begin{lemma} \label{lemma3}
For any global averaging round $k$, we assume that $\eta_{k}$ is non-increasing and $\eta_{k} \leq 2 \eta_{k+1}$ for all $t \geq 0$. The expected difference between $\bar{\mathbf{v}}_{k+1}$ and $\bar{\boldsymbol{w}}_{k+1}$ is bounded by
\begin{align}
    \mathbb{E}_{\mathcal{M}} || \bar{\mathbf{v}}_{k+1} - \bar{\boldsymbol{w}}_{k+1} ||^{2} \leq \frac{4}{M} \eta_{k}^{2} T^{2} G^{2}.
\end{align}
\end{lemma}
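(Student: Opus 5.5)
This is the standard partial-participation variance bound from the FedAvg analysis of Li et al., adapted to the present notation. It relies on the sampling scheme stated before Lemma~\ref{lemma2}: $\mathcal{M}$ is drawn with replacement using probabilities $q_u$, and, to get the clean constant, we take the uniform/weighted version in which each draw returns $\mathbf{v}_{u,k+1}$ and $\mathbb{E}$ of a single draw equals $\bar{\mathbf{v}}_{k+1}$. We also need a bound $\mathbb{E}\|\Tilde{g}_{u,k}^{t}\|^2\le G^2$ on the stochastic gradients. This is not stated among Assumptions~\ref{Assumption1}--\ref{Assumption3}, so we add it explicitly. It is natural here: parameter-shift gradients of a bounded observable $Z$ are bounded, so $G^2$ can be taken to absorb the shot-noise variance $\text{var}(\xi_k^t)$.

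\textbf{Step 1 (reduce to a single-draw variance).} Write $\bar{\boldsymbol{w}}_{k+1}=\frac{1}{M}\sum_{l=1}^{M}\mathbf{v}_{i_l,k+1}$ with $i_1,\dots,i_M$ i.i.d. draws. Lemma~\ref{lemma2} gives $\mathbb{E}_{\mathcal{M}}\mathbf{v}_{i_l,k+1}=\bar{\mathbf{v}}_{k+1}$. Since the draws are independent and centered, the cross terms vanish, and
\begin{align}
\mathbb{E}_{\mathcal{M}}\|\bar{\mathbf{v}}_{k+1}-\bar{\boldsymbol{w}}_{k+1}\|^2=\frac{1}{M}\,\mathbb{E}_{i}\|\mathbf{v}_{i,k+1}-\bar{\mathbf{v}}_{k+1}\|^2 .
\end{align}
Since a variance is at most the second moment about any fixed point, we can recenter at the common starting point $\bar{\boldsymbol{w}}_k=\boldsymbol{w}_{u,k}^{0}$, which every device shares after the broadcast. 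This gives the bound $\frac{1}{M}\,\mathbb{E}_i\|\mathbf{v}_{i,k+1}-\bar{\boldsymbol{w}}_{k}\|^2$.

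\textbf{Step 2 (bound local drift).} Unroll the local recursion: $\mathbf{v}_{u,k+1}-\bar{\boldsymbol{w}}_k=-\sum_{t=0}^{T-1}\eta_{k}\Tilde{g}_{u,k}^{t}$. If the step size varies within the period, use the monotonicity assumption to bound each step size by the period's first step size. If the statement's $\eta_k$ refers to the end of the period, use $\eta_{k}\le 2\eta_{k+1}$ to convert between the two; this conversion is where the constant $4=2^2$ comes from. Apply Cauchy--Schwarz, $\|\sum_{t=0}^{T-1}a_t\|^2\le T\sum_{t=0}^{T-1}\|a_t\|^2$, together with the bound $\mathbb{E}\|\Tilde{g}_{u,k}^{t}\|^2\le G^2$. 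This gives $\mathbb{E}\|\mathbf{v}_{u,k+1}-\bar{\boldsymbol{w}}_k\|^2\le 4\eta_k^2T^2G^2$ uniformly in $u$. Averaging over the draw $i$ and combining with Step 1 yields $\frac{4}{M}\eta_k^2T^2G^2$, which is the claimed bound.

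\textbf{Main obstacle.} The delicate point is Step 1. The unbiasedness and the $1/M$ variance reduction both require the with-replacement random sampling model. The actual Algorithm~\ref{algorithm2} instead selects the top-$M$ devices by fidelity deterministically. We will state explicitly that the lemma is proved under the random sampling model used in Lemma~\ref{lemma2}, with $q_u$ possibly chosen to favour high-fidelity devices, rather than under deterministic top-$M$ selection.
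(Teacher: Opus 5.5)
Your proposal is correct and follows essentially the same route as the paper's proof: it uses the $1/M$ single-draw variance identity under with-replacement sampling, recenters at the shared broadcast point $\bar{\boldsymbol{w}}_{k}$ via $\mathbb{E}\|x-\mathbb{E}x\|^2\le\mathbb{E}\|x-c\|^2$, and then applies Cauchy--Schwarz over the $T$ local steps with $\mathbb{E}\|\Tilde{g}_{u,k}^{t}\|^2\le G^2$ and $\eta_{k_0}\le 2\eta_{k}$ to produce the constant $4$. Your explicit bounded-gradient assumption and your caveat that the bound holds under the random with-replacement sampling model rather than deterministic top-$M$ fidelity selection are both things the paper relies on implicitly without stating.
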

\begin{proof}
See Section \ref{prooflemma3} in Appendix. \renewcommand{\qedsymbol}{}
\end{proof}

\begin{lemma} \label{lemma4}
Under Assumption \ref{Assumption3}, the expected upper bound of $\mathbb{E}[||\Tilde{g}_{k}^{t}||^2]$ is expressed as
\begin{align}
    &\mathbb{E}\bigg[||\Tilde{g}_{k}^{t}||^2\bigg] \leq \lambda \bigg(\frac{C_{1}}{U}+1\bigg) \bigg[\sum_{u=1}^{U}||\nabla f_{u}(\boldsymbol{w}_{u,k}^{t})||^{2}\bigg] + \frac{\sigma^{2}}{UB} \nonumber \\
    & \hspace{15em} + \frac{\text{var}(\xi_{u,k}^{t})}{U}.
\end{align}
\end{lemma}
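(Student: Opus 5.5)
The proof follows the standard bias--variance decomposition of the averaged stochastic gradient $\Tilde{g}_{k}^{t} = \frac{1}{U}\sum_{u\in\mathcal{U}} \Tilde{g}_{u,k}^{t}$. We write $\Tilde{g}_{u,k}^{t} = g_{u,k}^{t} + (\Tilde{g}_{u,k}^{t} - g_{u,k}^{t})$, so that $\Tilde{g}_{k}^{t} = g_{k}^{t} + \frac{1}{U}\sum_{u}(\Tilde{g}_{u,k}^{t} - g_{u,k}^{t})$. Since $\mathbb{E}\Tilde{g}_{u,k}^{t} = g_{u,k}^{t}$ (Assumption~\ref{Assumption3}), the cross term vanishes in expectation. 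This gives
\begin{align}
\mathbb{E}\big[\|\Tilde{g}_{k}^{t}\|^{2}\big] = \|g_{k}^{t}\|^{2} + \mathbb{E}\Big\|\frac{1}{U}\sum_{u=1}^{U}(\Tilde{g}_{u,k}^{t} - g_{u,k}^{t})\Big\|^{2}. \nonumber
\end{align}

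\textbf{Variance term.} Mini-batches $\xi_{u}$ are sampled independently across clients, and shot noise on different devices is independent. Hence the per-client errors are independent and zero mean, so their off-diagonal cross terms also vanish. The second term therefore equals $\frac{1}{U^{2}}\sum_{u}\mathbb{E}\|\Tilde{g}_{u,k}^{t} - g_{u,k}^{t}\|^{2}$. Applying the variance bound of Assumption~\ref{Assumption3} to each summand yields
\begin{align}
\frac{1}{U^{2}}\sum_{u=1}^{U}\Big(C_{1}\|g_{u,k}^{t}\|^{2} + \frac{\sigma^{2}}{B} + \text{var}(\xi_{u,k}^{t})\Big) = \frac{C_{1}}{U^{2}}\sum_{u}\|g_{u,k}^{t}\|^{2} + \frac{\sigma^{2}}{UB} + \frac{\text{var}(\xi_{u,k}^{t})}{U}. \nonumber
\end{align}
For the last equality we use a uniform per-client shot-noise variance, or equivalently interpret $\text{var}(\xi_{u,k}^{t})$ as the client average, consistent with the shot-noise bound stated earlier. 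This step already produces the last two terms of the lemma exactly.

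\textbf{Mean term and gradient diversity.} It remains to bound $\|g_{k}^{t}\|^{2} + \frac{C_{1}}{U^{2}}\sum_{u}\|g_{u,k}^{t}\|^{2}$ by $\lambda(\frac{C_1}{U}+1)\sum_u\|\nabla f_u(\boldsymbol{w}_{u,k}^{t})\|^{2}$. For the first piece we have
\begin{align}
\|g_{k}^{t}\|^{2} = \frac{1}{U^{2}}\Big\|\sum_{u}g_{u,k}^{t}\Big\|^{2} \le \frac{1}{U}\sum_{u}\|g_{u,k}^{t}\|^{2} \le \sum_{u}\|g_{u,k}^{t}\|^{2} \nonumber
\end{align}
by Jensen's inequality (Cauchy--Schwarz). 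For the second piece, $\frac{C_{1}}{U^{2}} \le \frac{C_{1}}{U}$. Since \eqref{eqn31} together with Cauchy--Schwarz forces $\lambda \ge 1/U$, and in the intended regime $\lambda \ge 1$, both pieces can be absorbed by multiplying by $\lambda$. Alternatively, one can invoke \eqref{eqn31} in the form $\sum_u\|g_{u,k}^{t}\|^{2}\le\lambda\|\sum_u g_{u,k}^{t}\|^{2}$ to pass between the sum of squares and the square of the sum, whichever direction is needed. Finally, we identify $g_{u,k}^{t}$ with $\nabla f_u(\boldsymbol{w}_{u,k}^{t})$ up to the $1/|\mathcal{S}_u|$ normalization, which is absorbed into the definition of $f_u$.

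\textbf{Main obstacle.} The delicate point is this last bookkeeping of constants. The paper's normalizations (the factors $1/U$ versus $1/U^{2}$, the direction of the diversity inequality, and the scaling $1/|\mathcal{S}_u|$) are loose, so I would first derive the clean inequality
\begin{align}
\mathbb{E}\big[\|\Tilde{g}_{k}^{t}\|^{2}\big] \le \Big(\frac{C_1}{U^{2}}+\frac{1}{U}\Big)\sum_u\|g_{u,k}^{t}\|^{2}+\frac{\sigma^2}{UB}+\frac{\text{var}(\xi_{u,k}^{t})}{U}. \nonumber
\end{align}
I would then argue that it is dominated by the stated bound, using $\lambda \ge 1$ and $U \ge 1$. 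The independence of the shot noise across devices must be stated explicitly as well, since it is what kills the cross terms.
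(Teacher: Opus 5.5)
Your proposal is correct and follows essentially the same route as the paper. Both use the bias--variance split $\mathbb{E}\|\Tilde{g}_{k}^{t}\|^{2}=\mathbb{E}\|\Tilde{g}_{k}^{t}-g_{k}^{t}\|^{2}+\|g_{k}^{t}\|^{2}$, cross-client independence with Assumption~\ref{Assumption3} for the variance, and Jensen for the mean, reaching $\frac{C_1+U}{U^2}\sum_u\|g_{u,k}^{t}\|^{2}$ before $\lambda$ is introduced; note that only $\lambda\ge 1/U$ is needed there, since $\frac{C_1}{U^2}+\frac{1}{U}=\frac{1}{U}\big(\frac{C_1}{U}+1\big)$, so you can drop the appeal to $\lambda\ge 1$ (and your derivation correctly yields the stated $\text{var}/U$ term, whereas the paper's last display has $\text{var}/M$).
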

\begin{proof}
See Section \ref{prooflemma4} in Appendix. \renewcommand{\qedsymbol}{}
\end{proof}

\begin{lemma} \label{lemma5}
The variance of the gradient estimate $\Tilde{g}_{u,k}^{t} = g_{u,k}^{t} + \xi_{u,k}^{t}$ can be upper bounded as 
\begin{align}
\text{var}(\xi_{k}^{t}) \leq \frac{1}{U} \sum_{u \in \mathcal{U}} \frac{\nu N_{z} D Tr(Z^{2})}{2H}. \label{eqn45}
\end{align}
\end{lemma}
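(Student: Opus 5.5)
We bound the shot-noise variance parameter by parameter and client by client, then average over clients. The gradient estimator is the parameter-shift rule, which applies because every trainable gate is $U_p(\boldsymbol{w}^p_u)=e^{-i\frac{\boldsymbol{w}^p_u}{2}G_p}$ with Pauli generator $G_p$ (eigenvalues $\pm1$). For each parameter this gives
\begin{align*}
\partial_p \langle Z\rangle_{\lvert\Psi_u(\boldsymbol{w}_u)\rangle} = \tfrac{1}{2}\Big(\langle Z\rangle_{\boldsymbol{w}_u+\frac{\pi}{2}e_p} - \langle Z\rangle_{\boldsymbol{w}_u-\frac{\pi}{2}e_p}\Big).
\end{align*}
The estimator $\Tilde{g}_{u,k}^{t}$ replaces each expectation by the empirical mean \eqref{eqn10} over $H$ independent shots. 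Hence $\xi_{u,k}^{t}$ is, componentwise, one half of the difference of two independent zero-mean sample-mean errors.

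\textbf{Single shot.} Using the eigendecomposition $Z=\sum_{j=1}^{N_z} h_j\Pi_j$, a single shot returns $h_j$ with probability $\operatorname{Tr}(\Pi_j\Psi_u)$. Its variance satisfies
\begin{align*}
\operatorname{Tr}(Z^2\Psi_u)-\operatorname{Tr}(Z\Psi_u)^2 \le \operatorname{Tr}(Z^2\Psi_u).
\end{align*}
To obtain the stated form, we bound this by summing over outcomes: $\sum_j h_j^2\operatorname{Tr}(\Pi_j\Psi_u)\le N_z\max_j h_j^2\,\text{(weights)}$. Then we use $\operatorname{Tr}(\Pi_j\Psi_u)\le 1$ and $h_j^2\le \operatorname{Tr}(Z^2)$ (since $\operatorname{Tr}(Z^2)=\sum_j h_j^2\operatorname{rank}\Pi_j$). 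This yields a per-shot bound of the form $N_z\operatorname{Tr}(Z^2)$ times a device-dependent factor.

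\textbf{Device noise.} We absorb hardware noise into the constant $\nu$. The state is really a noisy channel output $\rho_u$ rather than $\Psi_u$, so the effective shot distribution is inflated relative to the ideal one by at most a factor $\nu$. We take this as the modeling assumption that defines $\nu$.

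\textbf{Assembly.} The $H$ shots are independent, so each sample mean has variance at most $\nu N_z\operatorname{Tr}(Z^2)/H$. Each component of the shifted difference has variance
\begin{align*}
\tfrac14\cdot 2\cdot \nu N_z\operatorname{Tr}(Z^2)/H = \nu N_z\operatorname{Tr}(Z^2)/(2H).
\end{align*}
Summing over the $D$ gradient components (trace of the covariance, i.e. $\mathbb{E}\|\xi_{u,k}^t\|^2$) gives $\nu N_z D\operatorname{Tr}(Z^2)/(2H)$ for client $u$. Finally, $\xi_k^t=\frac1U\sum_u\xi_{u,k}^t$ by the definition of $\Tilde g_k^t$. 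We bound $\text{var}(\xi_k^t)$ by the client average of per-client variances: either by Jensen/convexity of $\|\cdot\|^2$, or, using independence across clients, by the sharper $\frac1{U^2}\sum_u$, which is at most $\frac1U\sum_u$. This gives \eqref{eqn45}, where clients may carry their own $\nu,H$ inside the sum.

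\textbf{Main obstacle.} The main obstacle is the per-shot constant: justifying exactly $N_z\operatorname{Tr}(Z^2)$, together with $\nu$, rather than the tighter $\|Z\|_\infty^2$. I would first try the crude route $\sum_j h_j^2\operatorname{Tr}(\Pi_j\rho)\le\sum_j h_j^2\le N_z\max_j h_j^2\le N_z\operatorname{Tr}(Z^2)$. Any extra constants go into $\nu$, as in the shot-noise analysis of \cite{jose2022error}, from which this bound's form appears to originate.
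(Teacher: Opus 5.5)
Your skeleton matches the paper's proof. The shared steps are the parameter-shift rule, the independence of the two shifted sample means (giving the factor $\tfrac14\cdot 2$), the $1/H$ factor from i.i.d.\ shots, the summation over the $D$ components, and the averaging over clients. Your convexity justification of that last step is fine.

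The gap is the one nontrivial step, the per-shot constant $\nu N_z\operatorname{Tr}(Z^2)$. Your chain $\sum_j h_j^2\operatorname{Tr}(\Pi_j\rho_u)\le\sum_j h_j^2\le N_z\max_j h_j^2\le N_z\operatorname{Tr}(Z^2)$ produces no $\nu$ and no device-dependent factor. You then insert $\nu$ by postulating that noise inflates the shot variance by at most $\nu$, which is not a derivation. Your chain already holds for every density matrix $\rho_u$, noisy or not, so no inflation factor is needed. Moreover, multiplying the proven bound by $\nu$ is valid only if $\nu\ge 1$ (or $N_z\nu\ge 1$, using $\sum_j h_j^2\le\operatorname{Tr}(Z^2)$ directly). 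In the paper, $\nu$ is not a hardware fudge constant. It is a uniform bound on the Bernoulli variance $v(p)=p(1-p)\le\tfrac14$ of the measurement-outcome indicators, and it is small when the outcome distribution is nearly deterministic. For that $\nu$, your argument does not establish the lemma.

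The missing idea is to expand one shot's centered outcome over outcome indicators. Let $W_y=\mathbb{I}\{Y=y\}$ and let $p_y$ be the outcome probability at the shifted parameters. The centered single-shot value is then $\sum_{y=1}^{N_z} h_y (W_y-p_y)$. Cauchy--Schwarz across outcomes gives $\big(\sum_y h_y(W_y-p_y)\big)^2\le\big(\sum_y h_y^2\big)\sum_y (W_y-p_y)^2$. Taking expectations, the single-shot variance is at most $\big(\sum_y h_y^2\big)\sum_y p_y(1-p_y)\le N_z\,\nu\operatorname{Tr}(Z^2)$, where $\sum_y h_y^2\le\operatorname{Tr}(Z^2)$ by your multiplicity remark. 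This is exactly where both $N_z$ and $\nu$ come from. If you substitute it for your ``device noise'' step, the rest of your assembly gives the stated bound $\frac{1}{U}\sum_{u}\frac{\nu N_z D\operatorname{Tr}(Z^2)}{2H}$ unchanged.
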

\begin{proof}
See Section \ref{prooflemma5} in Appendix. \renewcommand{\qedsymbol}{}
\end{proof}

\begin{lemma} \label{lemma6}
Provided that Assumption \ref{Assumption3} is fulfilled, the expected upper bound of the divergence of $\boldsymbol{w}_{u,k}^{t}$ is given as 
\begin{align}
     &\frac{1}{KT} \sum_{k=1}^{K} \sum_{t=1}^{T} \sum_{u=1}^{U} \bigg[\mathbb{E} ||\bar{\boldsymbol{w}}_{k}^{t} - \boldsymbol{w}_{u,k}^{t}||^{2}\bigg] \leq \frac{4}{M} \eta_{k}^{2} T^{2} G^{2} \nonumber \\
     &+ \frac{(2 C_{1} + T(T+1))}{KT} \frac{\lambda \eta_{k}^{2} (U+1)}{U} \sum_{k=1}^{K} \sum_{t=1}^{T} || \sum_{u=1}^{U} g_{u,k}^{t}||^{2} \nonumber \\
     &+ \frac{\eta_{k}^{2} KT(U+1)(T+1)\sigma^{2}}{UB}. \label{eqn78}
\end{align}
\end{lemma}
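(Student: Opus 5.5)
The plan is to split the divergence $\bar{\boldsymbol{w}}_{k}^{t} - \boldsymbol{w}_{u,k}^{t}$ into two pieces. The first is the error caused by partial-participation aggregation at the start of round $k$. The second is the drift accumulated over the local SGD steps since that aggregation. Since $\boldsymbol{w}_{u,k}^{0} = \bar{\boldsymbol{w}}_{k}$ for every $u$, unrolling the local update gives $\boldsymbol{w}_{u,k}^{t} = \bar{\boldsymbol{w}}_{k} - \eta_k \sum_{s=0}^{t-1} \Tilde{g}_{u,k}^{s}$, and similarly $\bar{\boldsymbol{w}}_{k}^{t} = \bar{\mathbf{v}}_{k} - \eta_k \sum_{s=0}^{t-1}\Tilde{g}_{k}^{s}$ for the virtual full average. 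I would add and subtract $\bar{\mathbf{v}}_{k}$ and use $\|a+b\|^2 \le 2\|a\|^2 + 2\|b\|^2$. The aggregation piece $\mathbb{E}_{\mathcal{M}}\|\bar{\mathbf{v}}_{k}-\bar{\boldsymbol{w}}_{k}\|^2$ is then controlled directly by Lemma~\ref{lemma3}, using the unbiasedness in Lemma~\ref{lemma2} so that cross terms vanish under $\mathbb{E}_{\mathcal{M}}$. This produces the $\frac{4}{M}\eta_k^2T^2G^2$ term. I would accept that constant factors get absorbed, since the stated bound has no extra factor of $2$ in front.

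For the drift piece, I would bound $\sum_{u}\mathbb{E}\|\eta_k\sum_{s<t}(\Tilde{g}_{k}^{s}-\Tilde{g}_{u,k}^{s})\|^2$. Using $\mathbb{E}\|X - \mathbb{E}X\|^2 \le \mathbb{E}\|X\|^2$ together with the averaging structure, this is at most a constant times $\eta_k^2 \sum_u \mathbb{E}\|\sum_{s<t}\Tilde{g}_{u,k}^{s}\|^2$ plus a $U\cdot\|\sum_{s<t}\Tilde{g}_{k}^{s}\|^2$ term. Together these account for the $(U+1)/U$ factor.

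Next I would write $\Tilde{g}_{u,k}^{s} = g_{u,k}^{s} + (\Tilde{g}_{u,k}^{s}-g_{u,k}^{s})$. The noise parts are zero-mean and independent across $s$, so their sum has variance equal to the sum of the variances. Assumption~\ref{Assumption3} bounds each of these by $C_1\|g_{u,k}^{s}\|^2 + \sigma^2/B$. The shot-noise contribution is folded in as the paper does elsewhere, or dropped here since it is treated separately via Lemma~\ref{lemma5}. For the mean part I would use Cauchy--Schwarz, $\|\sum_{s<t} g^s\|^2 \le t\sum_{s<t}\|g^s\|^2$.

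Summing over $t=1,\dots,T$ gives $\sum_t t \le T(T+1)/2$, which is the origin of the $T(T+1)$ factor, while the variance terms contribute the $2C_1$ part. Then the gradient-diversity bound \eqref{eqn31} converts $\sum_u\|g_{u,k}^{s}\|^2$ into $\lambda\|\sum_u g_{u,k}^{s}\|^2$. Averaging over $k$ and dividing by $KT$ yields the middle term of \eqref{eqn78}. The $\sigma^2/B$ contributions summed over $u$, $s$, $t$ and $k$ give the last term, $\eta_k^2 KT(U+1)(T+1)\sigma^2/(UB)$. This term keeps a crude (non-normalized) $KT$ factor, which suggests the intended bookkeeping is not to divide that piece by $KT$.

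I expect the main obstacle to be matching the exact constants and the normalization. In particular, the cross term between the aggregation error and the local drift must be handled: I would argue it vanishes in expectation via Lemma~\ref{lemma2}, or otherwise absorb it with Young's inequality. The indexing of $\eta_k$ also needs care, where the condition $\eta_{k-1}\le 2\eta_k$ from Lemma~\ref{lemma3} lets the previous round's stepsize be replaced by $\eta_k$. I would first get an inequality of the stated form with generic constants, and then tighten to the stated coefficients.
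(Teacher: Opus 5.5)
\textbf{Where you match the paper.} Your treatment of the local drift is the paper's argument. You unroll from the last synchronization, split $\Tilde{g}=g+(\Tilde{g}-g)$, and bound the noise with Assumption~\ref{Assumption3} and the mean with $\|\sum_{s<t}g^{s}\|^2\le t\sum_{s<t}\|g^{s}\|^2$. You then collect $\sum_t t$ into $T(T+1)$ and convert with \eqref{eqn31}. The paper also keeps only $C_1\|g\|^2+\sigma^2/B$ from Assumption~\ref{Assumption3}, so dropping $\operatorname{var}(\xi_{k}^{t})$ matches it, though strictly that term should appear in the bound.

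\textbf{First gap: the partial-participation piece.} This is set up wrongly. Since $\boldsymbol{w}_{u,k}^{0}=\bar{\boldsymbol{w}}_{k}$ for all $u$ and $\bar{\boldsymbol{w}}_{k}^{t}=\frac{1}{U}\sum_{u}\boldsymbol{w}_{u,k}^{t}$, the average starts at $\bar{\boldsymbol{w}}_{k}$, not at $\bar{\mathbf{v}}_{k}$. Hence $\bar{\boldsymbol{w}}_{k}^{t}-\boldsymbol{w}_{u,k}^{t}=\eta_k\sum_{s<t}(\Tilde{g}_{u,k}^{s}-\Tilde{g}_{k}^{s})$ exactly. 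There is no aggregation error during the local phase, and the $\frac{4}{M}\eta_k^2T^2G^2$ term is pure slack there.

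Suppose you did carry a common offset $\bar{\mathbf{v}}_{k}-\bar{\boldsymbol{w}}_{k}$ through the round. The cross term would then vanish identically after summing over $u$, because $\sum_u(\Tilde{g}_{u,k}^{s}-\Tilde{g}_{k}^{s})=0$. It would not vanish via Lemma~\ref{lemma2}, which does not apply: the round-$k$ gradients depend on the sampled set through $\bar{\boldsymbol{w}}_{k}$. Worse, the offset would be counted $U$ times at each of the $T$ steps. Lemma~\ref{lemma3} would then give $\frac{4U}{M}\eta_k^2T^2G^2$, a factor $U$ too large.

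The paper instead brings in $\bar{\mathbf{v}}$ only at the aggregation step. It writes $\|\bar{\boldsymbol{w}}_{k}^{t}-\boldsymbol{w}_{u,k}^{t}\|^2=A_1+A_2+A_3$ and uses $A_1=0$ off global rounds. At a global round it bounds $A_1$ by Lemma~\ref{lemma3} and drops $A_3$ under $\mathbb{E}_{\mathcal{M}}$. That is legitimate there, because $\bar{\mathbf{v}}_{k+1}$ and the pre-aggregation $\mathbf{v}_{u,k+1}$ are fixed before $\mathcal{M}$ is drawn.

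\textbf{Second gap: the constants.} Your plan is to absorb a factor $2$ and then ``tighten to the stated coefficients.'' An inequality with explicit constants leaves no room to absorb anything, and the stated constants are not what this argument gives.
\begin{itemize}
\item Sum the noise bound $C_1\|g_{u,k}^{s}\|^2$ over $s<t$ and then over $t\le T$. This weights $\|g_{u,k}^{s}\|^2$ by $T-s$, so the variance part contributes up to $2TC_1$, not $2C_1$.
\item Sum the $\sigma^2/B$ contributions over $u$. This gives a factor $U+1$, not $(U+1)/U$, leaving $\eta_k^2(U+1)(T+1)\sigma^2/B$ after averaging. The stated last term dominates this only when $KT\ge U$.
\end{itemize}
The paper's displayed sum over $r$ contains exactly these slips, and \eqref{eqn78} in fact fails as written.

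Here is a counterexample. Take $K=1$, $C_1=0$, and bounded additive gradient noise $\pm\sigma/\sqrt{B}$. Use identical quadratics of vanishing curvature started far from the optimum, so every $g_{u,1}^{t}$ stays equal to a common $g$; then \eqref{eqn31} holds with $\lambda\approx 1/U$. Let $g\to0$.
\begin{itemize}
\item Left side: $\sum_u\mathbb{E}\|\bar{\boldsymbol{w}}_{1}^{t}-\boldsymbol{w}_{u,1}^{t}\|^2\to\eta^2t(U-1)\sigma^2/B$, so the left side tends to $\eta^2(U-1)(T+1)\sigma^2/(2B)$.
\item Right side: the middle term vanishes and $G^2\to\sigma^2/B$, so the right side tends to $\eta^2\frac{\sigma^2}{B}\big(\frac{4T^2}{M}+\frac{T(U+1)(T+1)}{U}\big)$.
\item For $T=10$ and $U=M=100$, this is about $545$ against $115$ in units of $\eta^2\sigma^2/B$.
\end{itemize}

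What your argument, and the paper's, actually supports is the bound with $2TC_1$ in place of $2C_1$ and with last term $\eta_k^2(U+1)(T+1)\sigma^2/B$.
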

\noindent
\begin{proof}
See Section \ref{prooflemma6} in Appendix. \renewcommand{\qedsymbol}{}
\end{proof}

\noindent
\textcolor{black}{\textit{\textbf{Theorem 1.}}}
\textit{Let Assumptions \ref{Assumption1}, \ref{Assumption2}, \ref{Assumption3} hold, the convergence rate of the global model training with partial client participation is upper bounded after $K$ global rounds as}

{\footnotesize
\begin{align}
        &\frac{1}{KT} \sum_{k=0}^{K} \sum_{t=0}^{T} \mathbb{E}||\nabla f(\bar{\boldsymbol{w}}_{k}^{t})||^{2} \leq \frac{2 [f(\bar{\boldsymbol{w}}_{1}^{0}) - f^{*}]}{\eta_{k} KT} + \frac{L \eta \sigma^{2}}{UB} \nonumber \\
        &+ \frac{2 \eta_{k}^{2} \sigma^{2} L^{2} (T+1)}{B} \bigg(1+\frac{1}{U}\bigg) + \frac{1}{U} \sum_{u \in \mathcal{U}} \frac{\nu N_{z} D Tr(Z^{2})}{2H} + C, \label{eqn79new}
\end{align}}
\textcolor{black}{\textit{where $C$ denotes the additional penalty induced by partial client participation (i.e., aggregating only $M$ selected clients per round). This term captures the mismatch between the ideal full-participation averaging behavior and the practical partial-aggregation update. In particular, $C$ scales as $C \propto \frac{1}{M} T^{2} \eta_{k_{0}}^{2} G^{2}$. Therefore, in the worst-case upper bound, the dominant participation-related factors are the subset size $M$ (inverse effect) and the local update interval $T$ (quadratic effect), while larger step sizes and larger gradient magnitudes further amplify this penalty.}}
\begin{proof}
See Section \ref{prooftheorem1} in the Appendix. \renewcommand{\qedsymbol}{}
\end{proof}

\begin{remark}
Theorem 1 reveals that DQL convergence under partial participation is influenced not only by the total iterations, total devices, and number of measurements, but also significantly by the size of the client subset chosen in each round.
\end{remark}

\begin{remark}
From \eqref{eqn79new}, larger subsets and increased measurement shots enhance convergence, since averaging across more devices mitigates noise accumulation. This trend is further validated through the numerical experiments in Experiments and Results.
\end{remark}

\section{Design of Adaptive Post-Quantum Security in DQL}\label{sec:security}
A core section of the proposed QS-DQL system is a real-time, extremely practical, and adaptive security framework that is designed to provide robust, quantum-resistant protection while intelligently enhancing the resource efficiency in a quantum node without a trade-off in security.
\subsection{Framework Overview}
Our post-quantum security mechanism is based on the concept of dynamic versatility that moves beyond static security measures often found to be inefficient or insufficient. In a high-security circumstance, enforcing a constant level of maximum applicable security seems a viable option. However, given the dynamic nature of threats in practical scenarios \cite{li2025threats}, our system adjusts its protective measures in real-time based on the perceived threat environment, ultimately avoiding the overhead of naively applying maximum post-quantum cryptographic measures. This is achieved through a continuous and server-orchestrated closed-loop control mechanism, operating over a wireless TCP/IP local area network, which can be segregated as follows
\begin{enumerate}
\item \textit{Monitor:} The central server observes all incoming network traffic associated with the DQL process.
\item \textit{Detect \& Classify:} A trained QNN model analyzes the traffic features to detect and classify potential threats into specific attack categories.
\item \textit{Adapt:} Based on the classification, the server selects a corresponding security policy for the next global training epoch. This policy dictates a holistic set of parameters across multiple system layers.
\end{enumerate}
\textcolor{black}{
The adaptive response is a layered mechanism coordinating across three distinct operational domains. The foundational cryptographic layer modifies the underlying strength of the PQC primitives (for example, changing the level of key encapsulation from lower to a higher NIST security level). Building on this, the protocol layer controls the frequency of operations, like increasing the digital signing rate. The control in the application layer directly influences the DQL process by adjusting the device participation rate to manage the performance impact. }

\subsection{Threat Model and Detection} \label{subsec:threat model}
The QS-DQL framework considers a series of threats that are found to be prominent in different levels of a distributed training network. One is a threat from eavesdroppers who can intercept the communications, posing a significant threat. This evolves into “Harvest Now, Decrypt Later” attacks in the post-quantum era \cite{singh2024managing}, where the encrypted data is recorded today for decryption once powerful quantum computers emerge. Even the lightweight post-quantum cryptographic techniques remain at risk from ML-based and side-channel cryptanalysis methods \cite{li2023salsa}. 
Beyond passive eavesdropping, the threat model includes active attacks available in the \textcolor{black}{UNSW-NB15 dataset \cite{moustafa2015unsw}}, classified by intent and severity, mapped into three threat levels to trigger adaptive security responses.

 \noindent\textit{Level 0: No Threat.} This is the baseline operational state, corresponding to traffic classified as 'Normal'. The system assumes a secure network, optimizing performance with minimal security while retaining baseline protection.
 
\noindent\textit{Level 1: Reconnaissance Threat.} This level considers threats at the communication layer. It is triggered when the server's threat detection model analyzes traffic from external sources and identifies patterns matching 'Reconnaissance' or 'Analysis' attacks.
 
 \noindent\textit{Level 2: Active Intrusion Threat.} This is the highest threat state, triggered when the system detects a direct attempt to compromise its components. This level corresponds to the server analyzing traffic originating from a known client and classifying it as an 'Exploit' or 'Generic' attack, indicating a potential client compromise or other severe threats.

\textcolor{black}{To formalize the detection of these threats, we deploy a trained QNN model, $f_{det}$, on the central server. The server actively monitors every incoming network flow to distinguish between external threats from unknown parties and internal threats where anomalous traffic originates from a known client. The input to this model is a 32-feature vector, $X_k \in \mathbb{R}^{32}$. This vector is generated by simulating network traffic crafted to replicate the statistical features of attack vectors from the training dataset. For each global training round, the input vector $V_n \in \mathbb{R}^d$ is formed by measuring 7 dynamically generated traffic features (\texttt{dur}, \texttt{sbytes}, \texttt{spkts}, \texttt{sload}, \texttt{smeansz}, \texttt{sintpkt}, \texttt{sjit}) and appending the remaining 25 features from a corresponding sample in the training data.} The classifier maps this feature vector to a predicted attack class, $c_t$, from the set of possible classes $\mathcal{Y}$, expressed as $
    c_t = f_{det}(V_t), c_t \in \mathcal{Y}.
$
The predicted class $c_t$ is then mapped to the corresponding discrete system security level, $L_t \in \{0, 1, 2\}$, by a deterministic policy function $g_{det}(\cdot)$ that implements the mapping defined above. Mathematically, it is expressed as
$
    L_t = g_{det}(c_t)
$.
The choice of a QNN is strategic, supported by a growing body of research highlighting its suitability for intrusion detection \cite{subramanian2024hybrid, kukliansky2024network}, along with architectural consistency with DQL. The inherent quantum-based model with obliteration of a large parameter count in classical models makes QNN further suitable for real-time threat detection with reduced training and inference time \cite{abbas2021power}.
\subsection{Post-Quantum Cryptography Primitives}
To provide quantum-resistant security, our framework employs the \textcolor{black}{CRYSTALS suite of cryptographic algorithms \cite{aikata2022kali}}, standardized by the NIST.

\noindent\textit{CRYSTALS-Kyber:} For confidentiality, we use CRYSTALS-Kyber, a Key Encapsulation Mechanism (KEM) based on the Module-LWE problem. It offers multiple versions (e.g., ML-KEM-512, 768, 1024) corresponding to NIST security levels 1, 3, and 5. The role of this PQC algorithm is to securely establish a shared secret, from which a 256-bit key is derived. This key is then used for the actual encryption of all model updates using AES-GCM, an efficient Authenticated Encryption with Associated Data (AEAD) scheme that pairs it with a unique 96-bit nonce for each message.

\noindent\textit{CRYSTALS-Dilithium:} For authenticity and integrity, we use CRYSTALS-Dilithium, a Digital Signature Algorithm (DSA) also based on hard lattice problems. It provides multiple versions (e.g., ML-DSA-44, 65, 87) for NIST security levels 2, 3, and 5, and allows devices to sign model updates, enabling the server to verify data origin and integrity.

\subsection{Adaptive Security Levels}
For each security level $L_t \in \{0, 1, 2\}$ determined by the threat detection model, the framework enforces a specific security policy vector, $\mathbf{P}_L$. This vector defines the cryptographic primitives and application parameters for the duration of that level. In such a granular and multi-layered response, the policy vector is formally defined as
\begin{equation}
    \mathbf{P}_L = (\text{KEM}_L, \text{DSA}_L, \sigma_L, \mathcal{K}_{re, L}, m_L)
\end{equation}
where $\text{KEM}_L$ is the ML-KEM variant from CRYSTALS-Kyber, 
$\text{DSA}_L$ is the ML-DSA variant from CRYSTALS-Dilithium, 
$\sigma_L$ is the fraction of model updates digitally signed, $\mathcal{K}_{re,L}$ is the re-keying frequency, 
and $m_L$ is the fraction of $U$ devices selected per global round.

The specific policies for each level are detailed in Table~\ref{tab:adaptive_security_levels}.

\begin{table}[htbp]
\centering
\caption{Adaptive Security Policy Vector Definitions}
\label{tab:adaptive_security_levels}
\resizebox{\columnwidth}{!}{
\begin{tabular}{lccc}
\toprule
\textbf{Parameter/Threat Level} & \textbf{Level 0} & \textbf{Level 1} & \textbf{Level 2} \\
\midrule

\textbf{Kyber Version} ($\text{KEM}_L$) & ML-KEM-512 & ML-KEM-768 & ML-KEM-1024 \\
\textbf{Dilithium Ver.} ($\text{DSA}_L$) & ML-DSA-44 & ML-DSA-65 & ML-DSA-87 \\
\textbf{Signing Rate} ($\sigma_L$) & Init only & 0.2 (Every 5 rounds) & 1.0 (Always On) \\
\textbf{Re-keying} ($\mathcal{K}_{re}$) & Init only & 0.2 (Every 5 rounds) & 1.0 (Always On)\\
\textbf{Client Participation} ($m_L$) & 0.8 (Optimal) & 0.8 (Optimal) & 0.5 (Reduced) \\
\bottomrule
\end{tabular}
}
\end{table}

\noindent\textbf{Security Level 0:} A default operational state, optimized for performance in a trusted environment. Here, $\mathbf{P}_0$ uses the most lightweight configuration to minimize overhead, where it operates at the optimal client participation rate of 80\% ($m_0=0.8$).

\noindent\textbf{Security Level 1:} This level is activated in response to probing threats. Policy $\mathbf{P}_1$ escalates the cryptographic strength to NIST Level 3 primitives (ML-KEM-768, ML-DSA-65) and introduces periodic signing every 5 global rounds ($\sigma_1=0.5$) for integrity monitoring. Crucially, the framework is designed to absorb this additional security cost while maintaining the optimal 80\% client participation rate ($m_1=0.8$).

\noindent\textbf{Security Level 2:} This is a high-security lockdown state, activated in response to direct attacks. The corresponding policy ($\mathbf{P}_2$) is enforced, which includes excluding the client identified as the source of the threat and reducing the total device participation to 50\% ($m_2=0.5$). This policy also mandates the highest-strength PQC primitives (NIST Level 5), the signing of all training messages ($\sigma_2=1.0$), and an increased re-keying frequency.

\begin{algorithm}[!ht]
\footnotesize
\caption{Real-time Adaptive Post-Quantum Security}
\label{algo:adaptive_pqc_security}
\begin{algorithmic}[1]
\State \textbf{Input:} Threat detector $f_{\text{det}}$, mapping $g_{\text{det}}$, policy set $\{\mathbf{P}_L\}_{L \in \{0,1,2\}}$
\State \textbf{Output:} Security policy $\mathbf{P}_{L_t}$ for round $t+1$
\For{each round $t$}
    \State Extract traffic features $V_t$
    \State Predict class $c_t \gets f_{\text{det}}(V_t)$
    \State Map to level $L_t \gets g_{\text{det}}(c_t)$
    \State Apply policy $\mathbf{P}_{L_t} = (\text{KEM}_{L_t}, \text{DSA}_{L_t}, \sigma_{L_t}, \mathcal{K}_{re,L_t}, m_{L_t})$
    \State Configure PQC parameters, signing rate, re-keying, and client participation
    \State Distribute settings to devices for next round
\EndFor
\end{algorithmic}
\end{algorithm}
\textcolor{black}{
\noindent Algorithm \ref{algo:adaptive_pqc_security} summarizes the adaptive PQC mechanism in QS-DQL. At each round, the server extracts traffic features (line 4), classifies the threat level (line 5), and maps it to a security policy (line 6). The corresponding PQC settings are then enforced and distributed to devices for the next round (lines 7–9).}

\section{Experiments and Results} \label{experimentsandresults}
\subsection{Experimental Setup}
\noindent\textbf{Hardware Testbed:}
Practical quantum computing systems in the near-term operate in a hybrid model, where classical processors concurrently handle essential tasks such as state preparation, gate control signals, VQC parameter optimization, and all communication protocols. Inspired by this architecture, as seen in systems like SpinQ Gemini \cite{feng2022spinq}, which use FPGAs as classical control units, our experimental testbed (Fig. \ref{fig:hardware_testbed}) models these classical components using a cluster of Raspberry Pi devices. This approach allows us to precisely analyze the resource stress imposed by PQC on the control systems, with the understanding that experiments on physical quantum processors would introduce additional hardware noise affecting model convergence.

\begin{figure}
    \centering
    \includegraphics[width=0.8\linewidth]{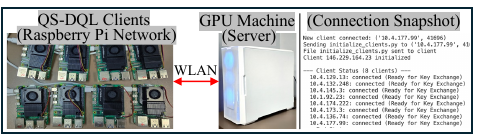}
    \caption{Hardware testbed setup for QS-DQL experiments. A Raspberry Pi network (clients) is connected to a GPU machine (server) via WLAN with a snapshot of the initialized connections.}
    \label{fig:hardware_testbed}
\end{figure}

Our hardware consists of a central server equipped with an NVIDIA GeForce RTX 4090 GPU and eight Raspberry Pi 5 devices. The server acts as the aggregator, while the Raspberry Pis simulate the distributed quantum nodes performing VQC training using TorchQuantum \cite{wang2022quantumnas}, a PyTorch-based state vector simulator. \textcolor{black}{Although this library uses a state-vector backend, it allows hardware compatibility by involving operations with standard native gates (e.g., rotation and CNOTs) and providing built-in conversion tools for Qiskit and OpenQASM deployments.} Initial experiments to analyze baseline DQL convergence were simulated on the server, while the full QS-DQL framework was deployed on the physical Raspberry Pi network. This setup allows us to analyze the actual resource stress imposed by PQC on the classical control systems within a DQL environment.
\\

\noindent\textbf{Datasets and Parameter Settings:} To evaluate the performance of the proposed frameworks, we consider a diverse set of datasets. The initial evaluations, including baseline DQL and its performance on fidelity-based partial participation, use CIFAR-10 \cite{krizhevsky2009learning} and MNIST \cite{deng2012mnist}. CIFAR-10 contains 60,000 color images (32×32 pixels) evenly split into 10 classes, while MNIST consists of 70,000 grayscale images (28×28 pixels) of handwritten digits in 10 classes. For the security extension in QS-DQL, we incorporate the OrganAMNIST dataset from the MedMNIST collection \cite{yang2023medmnist} to simulate training with sensitive medical data. OrganAMNIST includes 58,850 grayscale abdominal CT images (28×28 pixels) from 11 organ classes, representing privacy-critical healthcare data that demands higher security measures. Across all datasets, input images are passed through a single deep layer for dimensionality reduction, producing feature vectors matched to 8–10 qubits using amplitude encoding for state preparation. The subsequent DQL training for all experiments was conducted for 50 global rounds, with each client performing 5 local epochs per round using an SGD optimizer and a learning rate of 0.001. The experiments for evaluating the convergence behavior of DQL consider partial device participation, with 5 to 15 devices per communication round, under both IID and non-IID data distributions. \textcolor{black}{To investigate the impact of measurement shots on convergence, we vary the shot counts (MNIST: 1, 40, 100; CIFAR-10: 1, 100, 300). We include the single-shot ($H=1$) baseline to test the algorithm's robustness under maximum noise, while the higher values allow us to identify the point of diminishing returns. We selected this range to isolate the non-linear impact of noise, avoiding the variance saturation that may potentially mask these trends at higher shot counts.}

\textcolor{black}{For the threat detection QNN model, we utilize the UNSW-NB15 network intrusion dataset \cite{moustafa2015unsw}, specifically isolating four high-impact attack categories (Reconnaissance, Analysis, Generic, and Exploit) against normal traffic. The model architecture is constructed on a 5-qubit quantum circuit structure ($n=5$). To enable the processing of high-dimensional data (32 feature vectors including key features such as \texttt{dur}, \texttt{sbytes}, \texttt{spkts}, \texttt{sload}, \texttt{smeansz}, \texttt{sintpkt}, \texttt{sjit}), we employ amplitude encoding. This encoding method can map $2^n = 32$ selected classical features into the amplitudes of the quantum state vector. Before encoding, these feature vectors are $L_2$-normalized to satisfy the normalization constraint $\sum |x_i|^2 = 1$ required for unitary state preparation. Following the feature map, the state evolves through a variational circuit composed of 3 layers on the 5-qubit structure. Each layer consists of parameterized rotation gates ($R_y$ and $R_z$) applied to every qubit, followed by a sequence of circular CNOT entanglers designed to capture non-linear feature correlations. The measurement is done on Pauli-Z basis, and the output is fed into a softmax layer, whose dimension is equal to the number of classes. The network is trained as a multi-class classifier to categorize network traffic into three distinct attack levels. Training is performed for 100 epochs using the SGD optimizer with a learning rate of 0.001. }To simulate a realistic compromised environment, the security breach is simulated in a single client's communication channel with high fidelity and is constantly selected in the fidelity-based client selection procedure.

\subsection{Convergence Behavior of Baseline DQL Systems}
\begin{figure}[ht!]
    \centering
    \footnotesize
    \begin{subfigure}[t]{0.49\linewidth} 
        \centering
        \includegraphics[width=\linewidth]{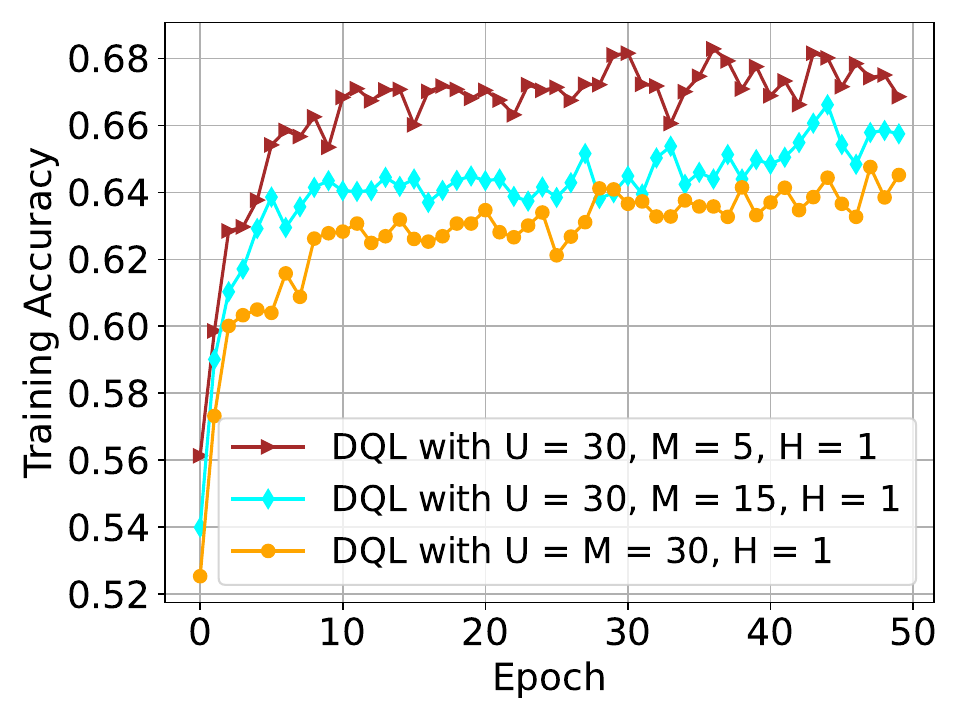}
        \caption{\footnotesize Training accuracy for single measurement shot.}
        \label{fig1a}
    \end{subfigure}
    \hfill 
    \begin{subfigure}[t]{0.49\linewidth} 
        \centering
        \includegraphics[width=\linewidth]{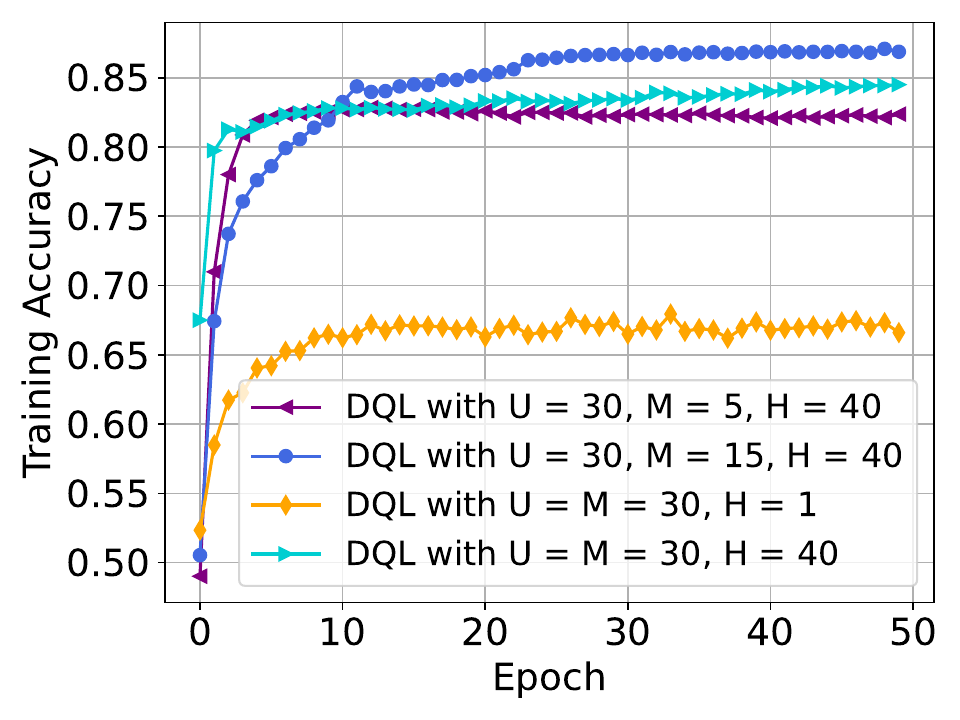}
        \caption{\footnotesize Training accuracy for higher measurement shots.}
        \label{fig1b}
    \end{subfigure}
    \vspace{5pt} 
    \begin{subfigure}[t]{0.49\linewidth} 
        \centering
        \includegraphics[width=\linewidth]{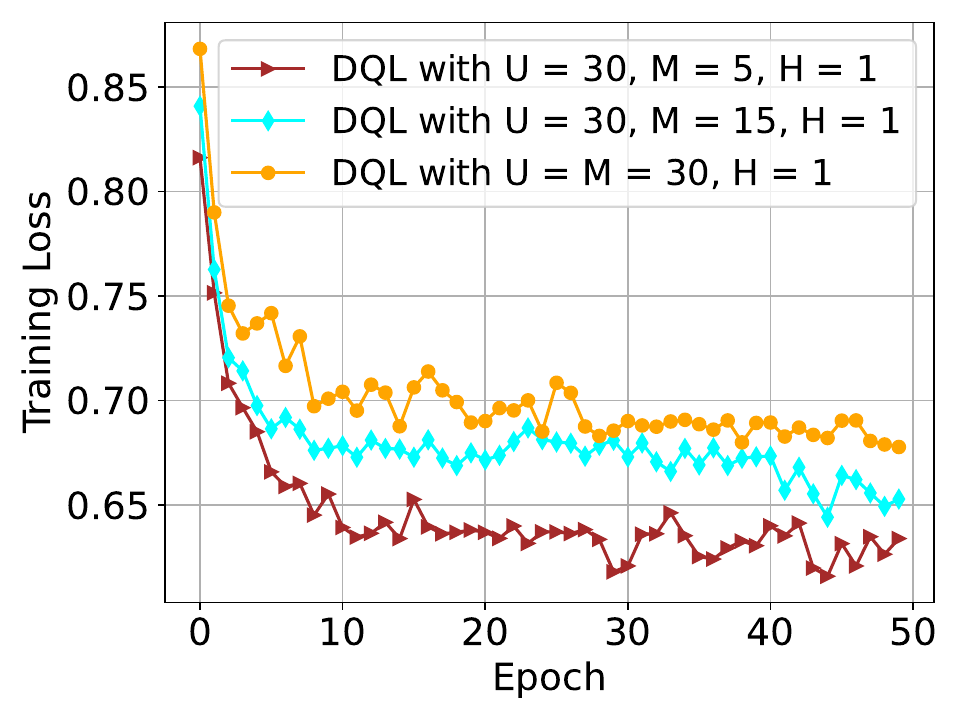}
        \caption{\footnotesize Training loss for single measurement shot.}
        \label{fig1c}
    \end{subfigure}
    \hfill 
    \begin{subfigure}[t]{0.49\linewidth} 
        \centering
        \includegraphics[width=\linewidth]{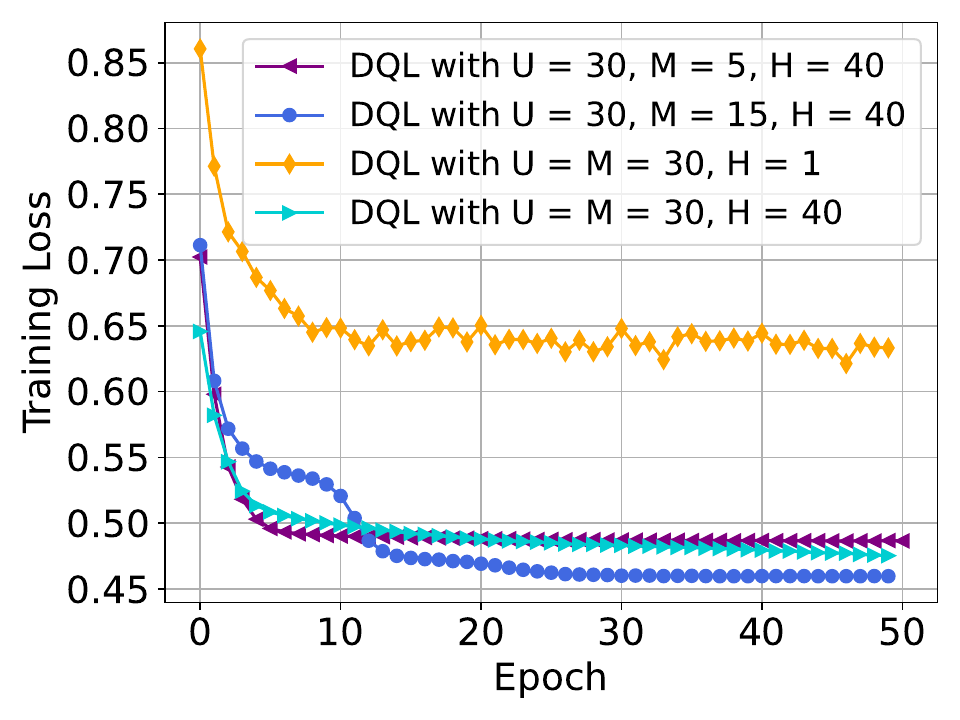}
        \caption{\footnotesize Training loss for higher measurement shots.}
        \label{fig1d}
    \end{subfigure}
    \vspace{-3pt}
    \caption{\footnotesize Training performance of DQL on MNIST (IID) under full and partial participation for different measurement shots, where $U$ denotes the total number of devices, $M$ the number of devices in the participating subset per round, and $H$ the number of measurement shots.}
    \label{fig1}
\end{figure}

\textcolor{black}{Fig.~\ref{fig1a} shows the impact of device participation on DQL training accuracy for the MNIST (IID) dataset with a single measurement shot. Here, IID refers to the setting where each client holds a representative subset of the global dataset, so local data and class distributions are statistically aligned.} 
\textcolor{black}{It is worth noting that the single-shot ($H=1$) experiments demonstrate the algorithm's robustness under maximum measurement uncertainty. While individual gradient estimates exhibit high variance, they remain unbiased estimators of the true gradient \cite{sweke2020stochastic}. Here, the optimization is based on temporal averaging across the training trajectory rather than spatial averaging over shots.} Furthermore, higher participation ($M = 30$) yields the lowest accuracy due to cumulative shot noise from aggregating all devices' updates, which amplifies errors in the global model. Partial participation alleviates this effect, using only 5 devices per epoch achieves the highest accuracy by limiting noise accumulation and stabilizing training. The 15-device case performs better than full participation, however, remains below the 5-device setup. These results reveal a clear trade-off between participation level and noise amplification, with fewer devices mitigating shot noise more effectively.

Fig.\ref{fig1c} presents the training loss for the same setup as Fig.\ref{fig1a}, mirroring the accuracy trends. Full participation ($M = 30$) incurs the highest loss due to accumulated shot noise, while reducing participation to 5 devices achieves the lowest loss through more stable convergence. The 15-device case performs better than full participation, however, worse than 5 devices. These results further confirm that limiting device participation can effectively mitigate shot noise, enhancing both stability and performance in QFL.

Fig.~\ref{fig1b} shows the effect of device participation and measurement shots on DQL accuracy for the MNIST (IID) dataset. With full participation ($M = 30$), accuracy is lowest at a single shot ($H = 1$) due to amplified shot noise, however, it improves notably at $H = 40$, indicating noise reduction with more shots. The best accuracy occurs with partial participation ($M = 15$, $H = 40$), balancing noise mitigation. While $M = 5$, $H = 40$ starts strong, it eventually lags behind $M = 15$, $H = 40$, showing that with reduced noise, contributions from more devices aid performance.

Fig.~\ref{fig1d} presents the corresponding loss trends. Full participation with $H = 1$ yields the highest loss, which decreases substantially at $H = 40$. The $M = 15$, $H = 40$ setup achieves the lowest loss, confirming that moderate participation with higher shots offers the best trade-off. In contrast, $M = 5$, $H = 40$ performs worse than $M = 15$, $H = 40$, reinforcing the benefit of aggregating more device updates once noise effects are minimized.

\textcolor{black}{Fig.~\ref{fig2} shows CIFAR-10 results under non-IID conditions. Here, non-IID refers to a skewed client data partition, where each client’s local label distribution is imbalanced relative to the global dataset (via a Dirichlet-based split with $\alpha_{\mathrm{Dir}}=0.5$). We generate this setting using a Dirichlet-based split with concentration parameter $\alpha_{\mathrm{Dir}}=0.5$, creating a moderate to high heterogeneity. This setting induces client drift, meaning that local gradients can deviate from the global objective. It typically increases training variance and slows convergence compared to the IID case.
Nonetheless, Fig.~\ref{fig2} reveals trends consistent with Fig.~\ref{fig1} in the sense that the best-performing participation level is not fixed. Rather, the ``optimal'' configuration shifts with the measurement-shot budget $H$. When $H$ is low, a smaller $M$ can be suitable because aggregating many highly noisy client updates accumulates shot noise. For higher $H$, the per-client update variance drops. Thus, a larger $M$ becomes preferable to accommodate more diverse non-IID data. This consistency across datasets and distribution settings highlights a core principle of DQL: \textit{optimal performance depends on balancing measurement shots with device participation}. The similarity in performance across datasets and distributions indicates that this trade-off is a fundamental property of DQL, robust to varying data types and settings.}

\begin{figure}[ht!]
    \centering
    \footnotesize
    \begin{subfigure}[t]{0.49\linewidth} 
        \centering
        \includegraphics[width=\linewidth]{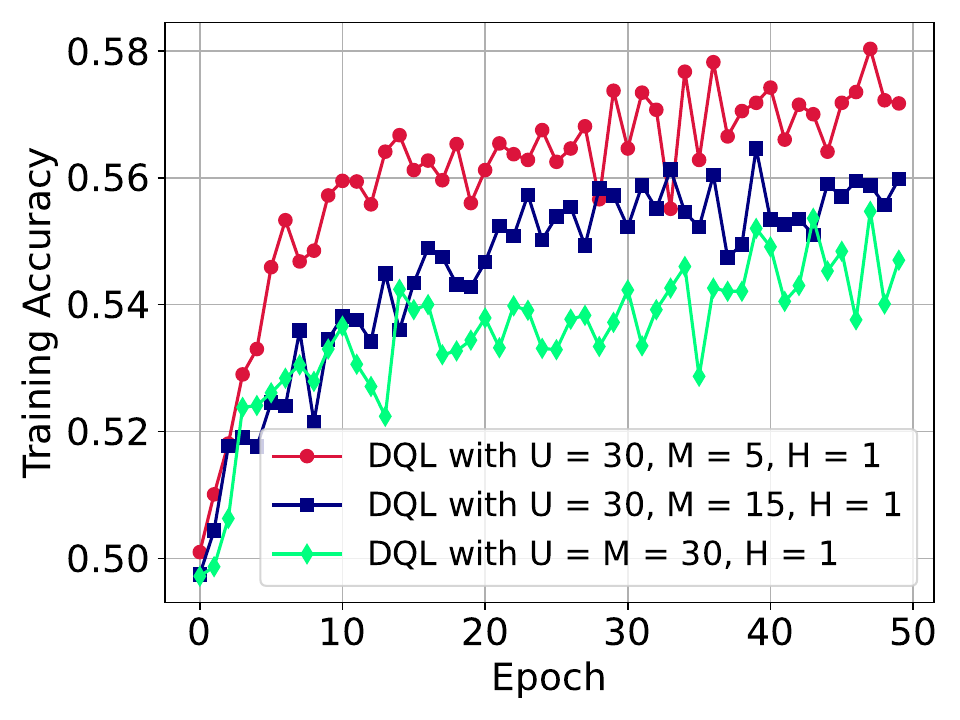}
        \caption{\footnotesize Training accuracy for a single measurement shot.}
        \label{fig2a}
    \end{subfigure}
    \hfill 
    \begin{subfigure}[t]{0.49\linewidth} 
        \centering
        \includegraphics[width=\linewidth]{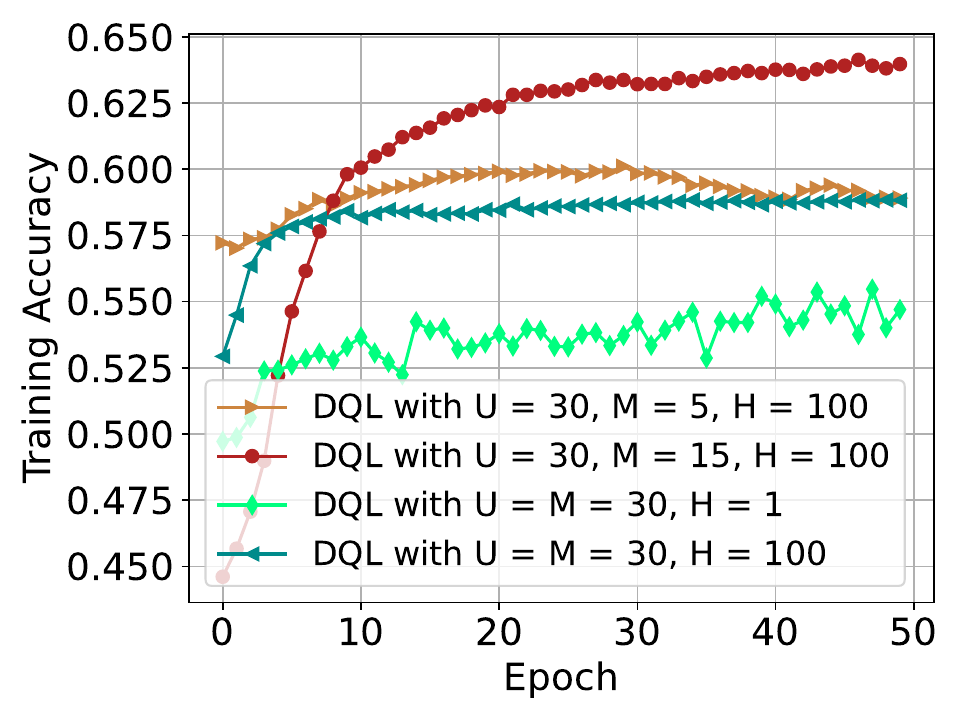}
        \caption{\footnotesize Training accuracy for single as well as higher measurement shots.}
        \label{fig2b}
    \end{subfigure}
    \vspace{5pt} 
    \begin{subfigure}[t]{0.49\linewidth} 
        \centering
        \includegraphics[width=\linewidth]{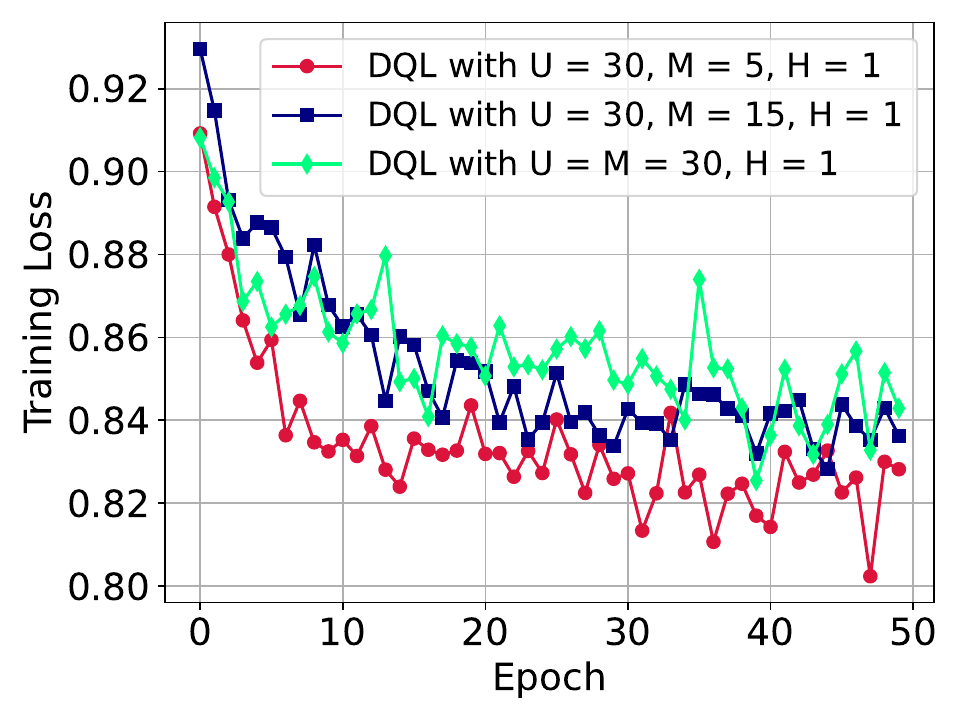}
        \caption{\footnotesize Training loss for single measurement shot.}
        \label{fig2c}
    \end{subfigure}
    \hfill 
    \begin{subfigure}[t]{0.49\linewidth} 
        \centering
        \includegraphics[width=\linewidth]{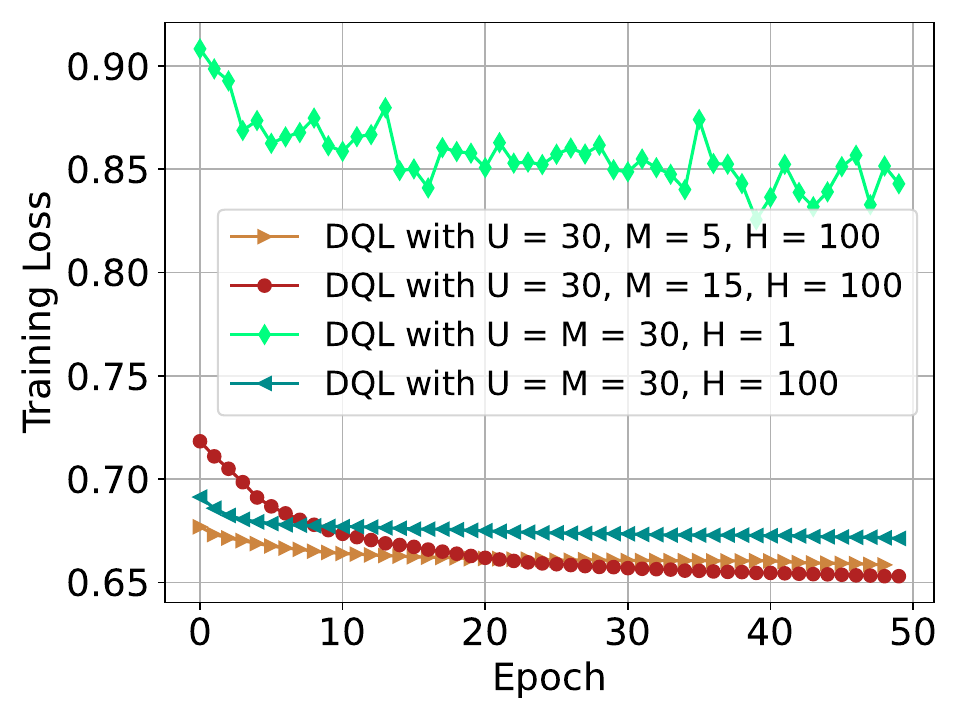}
        \caption{\footnotesize Training loss for single as well as higher measurement shots.}
        \label{fig2d}
    \end{subfigure}
    \vspace{-3pt}
    \caption{\footnotesize Training performance of DQL on CIFAR-10 dataset (non-IID) under full and partial client participation for different measurement shots, where $U$ denotes the total number of devices, $M$ the number of devices in the participating subset per round, and $H$ the number of measurement shots.}
    \label{fig2}
\end{figure}

\subsection{Security Analysis with Post-Quantum Cryptography}
\noindent\textbf{Threat Simulation and Detection:} 
The performance of this QNN-based threat detection model governs the transition between security levels. 
The comparative training performance of the proposed QNN and a conventional deep neural network (DNN) is illustrated in Fig. \ref{fig:threat detection training and test}. While both models ultimately converge to a similar high level of accuracy of over 90\%, the steeper training curve of QNN demonstrates its superior training efficiency, rapidly capturing the complex, non-linear relationships of the features.
The confusion matrix in Fig. \ref{fig:confusion matrix} further validates this robustness of QNN in testing data, revealing minimal misclassifications. 
\begin{figure}[htbp]
    \centering
    \begin{subfigure}[c]{0.54\linewidth} 
        \centering
        \includegraphics[width=\linewidth]{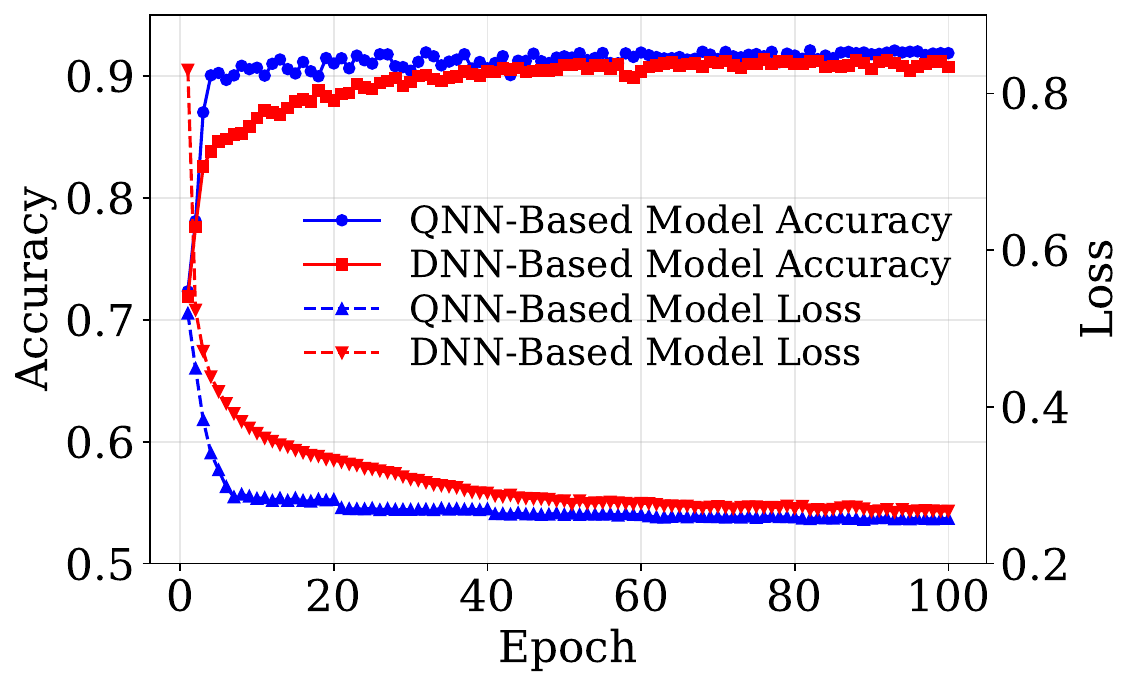}
        \vspace{-6mm}
        \caption{}
        \label{fig:threat detection training and test}
    \end{subfigure}
 \hfill 
    \begin{subfigure}[c]{0.43\linewidth} 
        \centering
        \includegraphics[width=\linewidth]{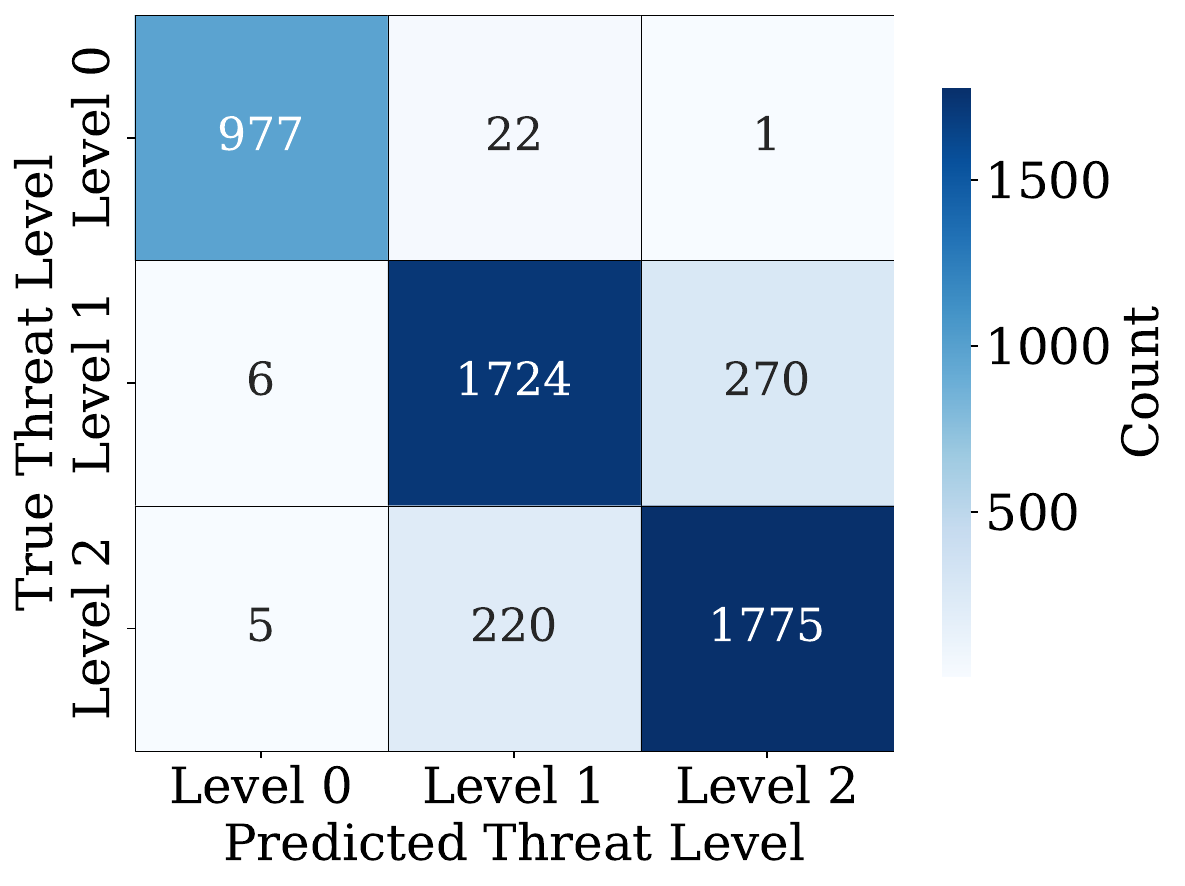}
        \caption{}
        \label{fig:confusion matrix}
    \end{subfigure}
    \caption{Performance evaluation of the threat detection model: (a) Training performance of QNN against baseline DNN. (b) Confusion matrix illustrating the classification results on a test set with QNN.}
    \label{fig:threat detection}
\end{figure}
Notably, these rare errors primarily occur between the two attack-related threat levels (Level 1 and Level 2), rather than misclassifying an attack as normal traffic. To enhance system stability and prevent false escalations, the final threat level is determined by monitoring the model's classification confidence over a window of trailing rounds, making the overall detection mechanism more robust.

\noindent\textbf{QS-DQL training under dynamic threats:}
\begin{figure}
    \centering
    \includegraphics[width=0.75\linewidth]{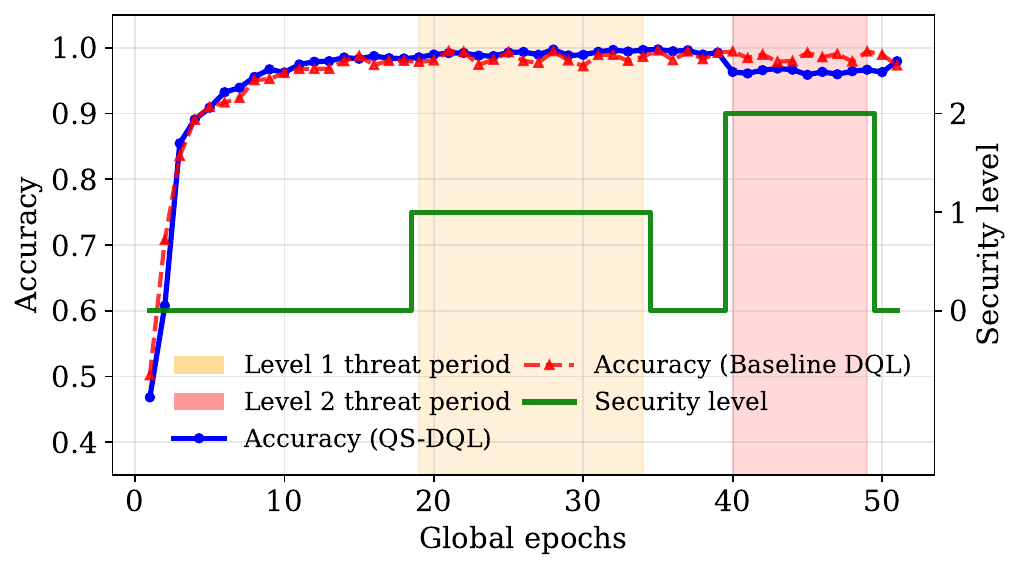}
    \caption{Performance of the adaptive QS-DQL framework under simulated threat scenarios. The plot shows the dynamic transition between security levels while the model's training accuracy continues to converge.}
    \label{fig:accuracy and security level switch}
\end{figure}

Fig. \ref{fig:accuracy and security level switch} demonstrates the training performance of the proposed QS-DQL framework. \textcolor{black}{In this experiment, we use OrganMNIST in a binary classification setting (two classes) to keep the task simple and isolate the effect of security-level transitions on training stability. Under this setting, the global model achieves a final accuracy of 98\% while seamlessly transitioning between security levels in response to simulated threats, which were generated by replicating network scenarios from the UNSW-NB15 dataset.} Throughout this training, the system also benefits from the fidelity-aware client participation technique. However, a slight depreciation in the model's convergence rate is observed during the Level 2 attack. This is an intentional design trade-off, where the framework overrides the optimal client selection and reduces participation to 50\% to stabilize the network under the high cryptographic load.

\noindent\textbf{Efficiency comparison of QS-DQL with baseline DQL:}
The resource consumption of the QS-DQL framework against static baselines with a fixed single security level is compared to quantify the advantage of the proposed method. Fig.~\ref{fig:cpu cycles over rounds} illustrates the average CPU cycles consumed by security operations on all client sides across all training rounds. For the correct instrumentation, the CPU frequency was set to a static 2.4 GHz and the encryption-related tasks were pinned to a single core. The static Level 2 baseline incurs the highest CPU cycles of (133 $\pm$25) Millions across all rounds, whereas Level 0 has the lowest (22$\pm$3 Millions) as expected, and Level 1 baseline consumes (63$\pm$15 Millions) CPU cycles. However, the adaptive QS-DQL model effectively enforces the Level 0 baseline during normal operations, only escalating its resource consumption during the simulated threat periods (rounds 20-30 and 40-50). This demonstrates the significant computational savings achieved by avoiding a constant high-security posture.
\begin{figure}
    \centering
    \includegraphics[width=0.75\linewidth]{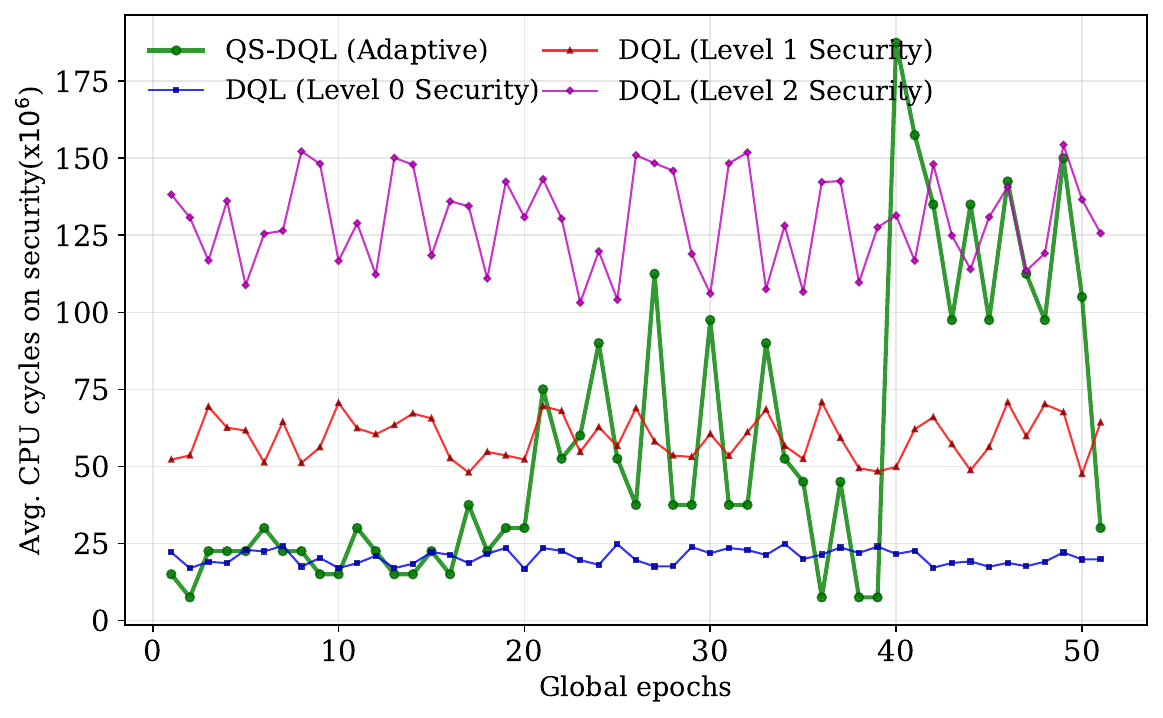}
    \caption{Comparison of average CPU cycles for security operations in proposed QS-DQL devices against static security level DQL baselines, evaluated across all training rounds.}
    \label{fig:cpu cycles over rounds}
\end{figure}

The average time taken for security operations in the entire client's plane is demonstrated in Fig.~\ref{fig:combined_security_time_clients}. Over 50 global rounds, a static Level 2 baseline consumes 1.95 times the total computational time on security operations compared to our adaptive framework. This bottleneck highlights the critical role of our adaptive framework in preserving overall system throughput.

\begin{figure}[htbp]
    \centering
    \begin{subfigure}[t]{0.495\linewidth}
        \centering
        \includegraphics[width=\linewidth]{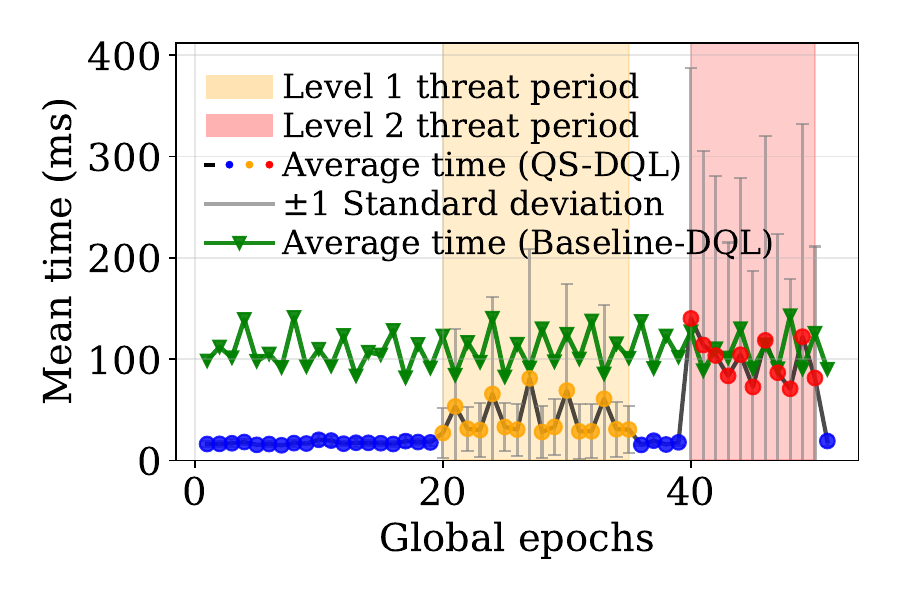}
        \caption{Average time consumed.}
        \label{fig:combined_security_time_clients}
    \end{subfigure}
    \hfill
    \begin{subfigure}[t]{0.465\linewidth}
        \centering
        \includegraphics[width=\linewidth]{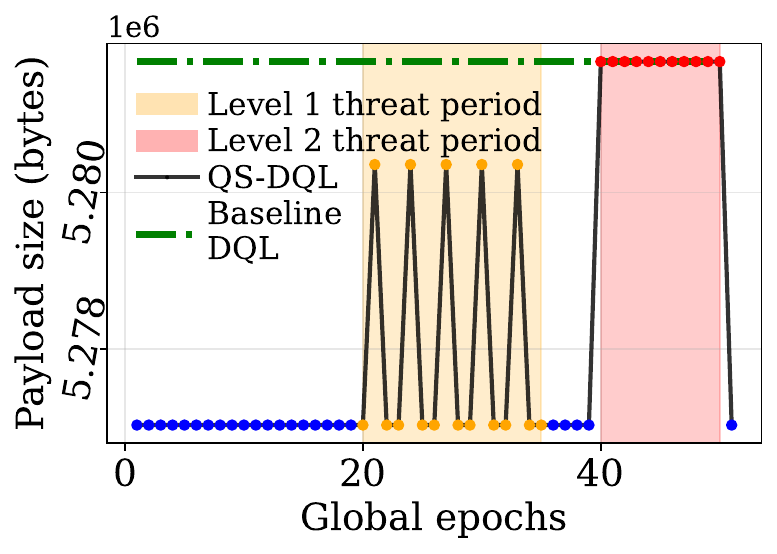}
        \caption{Average payload size.}
        \label{fig:average_payload_size}
    \end{subfigure}
    
    \caption{Comparison of (a) average time consumed for enforcing security mechanisms across all devices and (b) payload sizes from the client to server during model update over 50 global rounds in QS-DQL.}
    \label{fig:security_payload_comparison}
\end{figure}
\textcolor{black}{In our analysis, the focus is primarily on client-side computational overhead. We exclude server-side aggregation complexity, as this operation is linear in the number of model parameters $P$ and participating clients $M$ (i.e., $\mathcal{O}(MP)$) \cite{mcmahan2017communication}. This is typically considered computationally negligible compared to the iterative optimization times performed on client nodes \cite{luo2024communication}. }Furthermore, network latency is treated as out of scope due to deployment dependence, and we quantify this overhead using the per-round payload size. As shown in Fig.~\ref{fig:average_payload_size}, while the model size is dominant over security-related bytes, the PQC overhead is still measurable. On average, our adaptive framework reduces the data payload by 5 KB per transmission compared to a static Level 2 baseline. Although this overhead seems small in our experimental environment, its effect seems to be detrimental to a large-scale DQL network, urging the necessity of our proposed dynamic framework. 

\begin{table}[htbp]
\centering
\caption{CPU cycles for key generation (Kyber), signing (Dilithium), and encryption, along with Peak Memory Usage.}
\label{tab:security_cycles_memory}
\scriptsize
\setlength{\tabcolsep}{1pt}
\begin{tabular}{lccc}
\toprule
\shortstack[c]{Security Level\\ in \\QS-DQL}&
\shortstack[c]{Total Cycles\\($\times 10^9$)\\(Security Ops)} &
\shortstack[c]{Avg. Cycles/Round\\($\times 10^6$)\\(Security Ops)} &
\shortstack[c]{Peak Memory\\ Usage (KB)\\(Overall Training)} \\
\midrule
Level 0 (Normal) & 2.01 & 32.4  & 356{,}470 \\
Level 1 (Medium) & 3.78 & 74.1  & 356{,}520 \\
Level 2 (High)   & 5.67 & 171.8 & 357{,}820 \\
\bottomrule
\end{tabular}
\end{table}

To profile the stress over a single client, Table \ref{tab:security_cycles_memory} details the total and average CPU cycles consumed exclusively by PQC operations during dynamic escalation, alongside the overall peak memory usage during training. The results clearly show a direct correlation between the security level and resource cost. Compared to Level 0, the computational cycles increase by over 2.2 times at Level 1 and more than 5.3 times at Level 2, while peak memory usage also shows a slight increase as the framework transitions. This overhead is due to the stronger PQC primitives, as the higher-level variants of Kyber and Dilithium operate on larger matrices and polynomials.



\begin{figure}[htbp]
    \centering
    \begin{subfigure}[b]{0.485\linewidth}
        \centering
        \includegraphics[width=\linewidth]{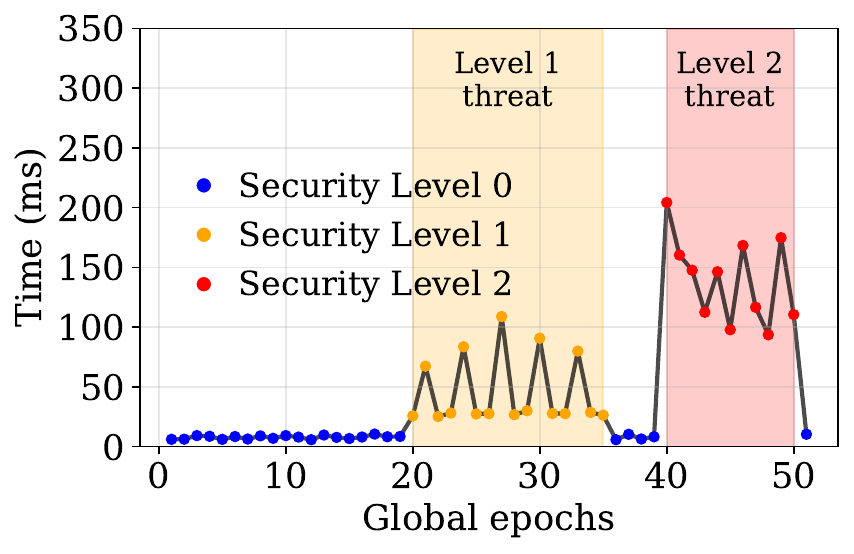}
        \caption{Time consumed by security operations in a single client.}
        \label{fig:pure_security_time}
    \end{subfigure}
    \hfill
    \begin{subfigure}[b]{0.475\linewidth}
        \centering
        \includegraphics[width=\linewidth]{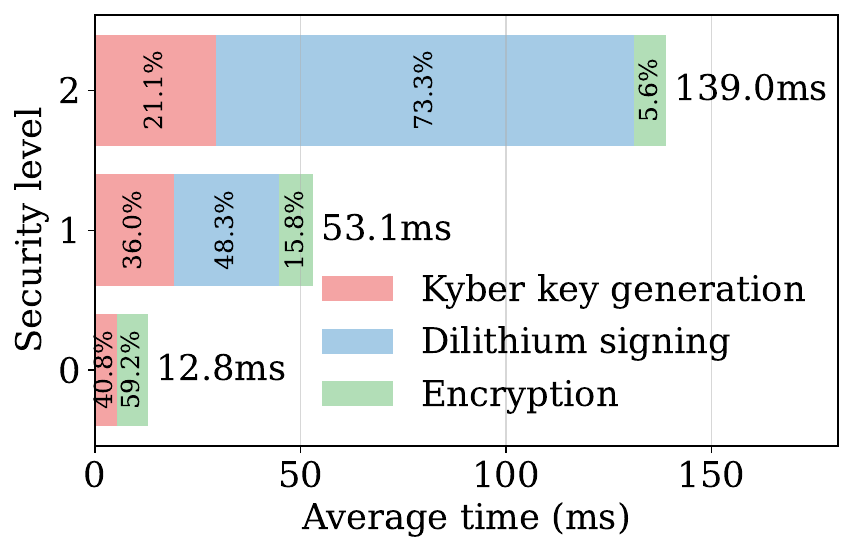}
        \caption{Operation-wise breakdown of security levels. }
        \label{fig:security_breakdown}
    \end{subfigure}
    \caption{Comparison of (a) time consumed by security mechanism and (b) breakdown of security operations in a single client.}
    \label{fig:single_client_security_comparison}
\end{figure}

Fig. \ref{fig:pure_security_time} isolates the time spent exclusively on cryptographic operations for a single client within the QS-DQL framework. The breakdown of this overhead is detailed in Fig. \ref{fig:security_breakdown}. At Level 2, Dilithium signing emerges as the most computationally expensive task, dominating the total security time. This occurs even though both signing and re-keying (Kyber) are performed in every round, highlighting the inherent cost of the signature algorithm. In contrast, at Level 1, where both PQC operations are invoked periodically, their time contributions are comparable. Across all security levels, the symmetric encryption consistently represents the smallest fraction of the computational overhead.

Collectively, these results validate that our adaptive framework effectively manages security overheads to maintain system efficiency. However, this framework has limitations that future research should address. The reliance on a central server creates a potential single point of failure, urging the use of decentralized architectures. Furthermore, the threat model is currently focused on communication security, as one mechanism cannot protect against all kinds of threats; future work should expand this to include adversarial attacks on the quantum models.

\section{Conclusion}

This paper introduced a two-stage framework for DQL that jointly targets convergence optimization and adaptive security. In Stage~1, we analyzed convergence under partial device participation with non-convex objectives and heterogeneous data, and derived an upper bound that exposes a fundamental trade-off among convergence behavior, the measurement-shot budget, and the participating device subset size. In Stage~2, we proposed a multi-layered PQC architecture with a QNN-driven adaptive mechanism that monitors operating conditions, assesses potential threats, and dynamically selects across three NIST-aligned security levels to control cryptographic overhead. Numerical simulations validate the theoretical trends and demonstrate that the proposed design can mitigate security costs while maintaining stable learning across diverse data and network settings. \textcolor{black}{As future work, we will extend this framework toward real deployment by decentralizing aggregation and threat detection via consensus-based peer-to-peer updates and hierarchical edge-assisted aggregation. The fidelity-aware participation mechanism will be retained through local neighbor-selection strategies. We will expand the threat model to stronger adversaries (e.g., backdoor and poisoning attacks) and integrate secure aggregation with robust aggregation rules to withstand malicious or unreliable clients. For a more thorough evaluation of convergence behavior, future studies will incorporate calibrated hardware-realistic noise channels from real quantum processors. The proposed QNN-based DQL pipeline will then be deployed on real quantum hardware under practical shot budgets and time-varying device conditions.}

\bibliographystyle{IEEEtran}
\bibliography{Main_Bibliography}

\appendix

\noindent
\textbf{\textit{Detailed Derivations}}
We now present proofs of lemmas and theorem used throughout our paper.

\subsection{Proof of Lemma \ref{lemma1}} \label{prooflemma1}
As mentioned before, let $\mathcal{U} = \{1, 2, \dots, U\}$ denote the set of devices, and let $\Tilde{g}_{k}^{t} = \frac{1}{U} \sum_{u \in \mathcal{U}} \Tilde{g}_{u,k}^{t}$ represent the average of their local stochastic gradients at local iteration $t$ at global round $k$. We have
\begin{align}
    &- \mathbb{E}_{\{\xi_{1,k}^{t}, \dots, \xi_{u,k}^{t} | \boldsymbol{w}_{1,k}^{t}, \dots, \boldsymbol{w}_{U,k}^{t}\}} \mathbb{E}_{\{1, 2, \dots\, U \}\in \mathcal{U}} \bigg[\langle \nabla f(\bar{\boldsymbol{w}}_{k}^{t}), \Tilde{g}_{k}^{t} \rangle\bigg] \nonumber \\
    &\stackrel{\textcircled{\footnotesize 1}}{=} - \mathbb{E}_{\{1, 2, \dots\, U \}\in \mathcal{U}} \nonumber \\
    & \hspace{2.5em} \mathbb{E}_{\{\xi_{1,k}^{t}, \dots, \xi_{u,k}^{t} | \boldsymbol{w}_{1,k}^{t}, \dots, \boldsymbol{w}_{U,k}^{t}\}}  \bigg[\langle \nabla f(\bar{\boldsymbol{w}}_{k}^{t}), \frac{1}{U}\sum_{u \in \mathcal{U}}\Tilde{g}_{u,k}^{t} \rangle\bigg] \nonumber \\
    &\stackrel{\textcircled{\footnotesize 2}}{\leq} \frac{1}{2} \bigg[ 
    -||\nabla f(\bar{\boldsymbol{w}}_{k}^{t})||_{2}^{2} - || \sum_{u=0}^{U} \nabla f_{u}(\boldsymbol{w}_{u,k}^{t})||_{2}^{2} \nonumber \\
    & \hspace{12em} + \sum_{u=0}^{U} L^{2} || \bar{\boldsymbol{w}}_{k}^{t} - \boldsymbol{w}_{u,k}^{t}||_{2}^{2}\bigg],
\end{align} 

where \textcircled{\footnotesize 1} is due to the fact that random variables $\xi_{u,k}^{t}$ and $\mathcal{N}$ are independent, and \textcircled{\footnotesize 2} follows from Assumption 0.1.

\subsection{Proof of Lemma \ref{lemma2}} \label{prooflemma2}
Let $\{x_i\}_{i=1}^U$ denote any fixed deterministic sequence. We sample a multiset $\mathcal{M}$ (with size $M$) by the procedure where for each sampling time, we sample $x_{u}$ with probability $q_{u}$ for each time. Let $\mathcal{M} = \{i_{1}, i_{2}, \dots, i_{M}\} \subset \mathcal{U}$ (some $i_{u}$’s may have the same value). Then
\begin{align}
\mathbb{E}_{\mathcal{M}} \left[ \sum_{u \in \mathcal{M}} x_{u} \right] &= \mathbb{E}_{\mathcal{M}} \left[ \sum_{u=1}^{M} x_{i_{u}} \right] = M \sum_{u=1}^{U} q_{u} x_{u}.
\end{align}
It is easy to prove this lemma when equipped with this observation.

\subsection{Proof of Lemma \ref{lemma3}} \label{prooflemma3}
\noindent
Let $\mathcal{M} = \{i_{1}, i_{2}, \dots, i_{M}\}$ denote the multiset of chosen indexes. We define $\bar{\boldsymbol{w}}_{k+1} = \frac{1}{M} \sum_{l=1}^{M} v_{i_{l}}^{k+1}$. Taking expectation over $\mathcal{M}$, we have $\mathbb{E}_{\mathcal{M}} || \bar{\boldsymbol{w}}_{k+1} -\bar{\mathbf{v}}_{k+1} ||^{2} = \frac{1}{M} \sum_{u=1}^{U} q_{u} || \mathbf{v}_{u,k+1} - \bar{\mathbf{v}}_{t+1} ||^2,$
Since $k+1$ is a global round, we know that $k_{0} = k$ is also a global communication round, which implies $\{\boldsymbol{w}_{u,k_{0}}\}_{u=1}^U$ is identical. Then $\sum_{u=1}^{U} q_{u} || \mathbf{v}_{u,k+1} - \bar{\mathbf{v}}_{t+1} ||^2 \leq \sum_{u=1}^{U} q_{u} || (\mathbf{v}_{u,k+1} - \bar{\boldsymbol{w}}_{k_{0}})||^2$, where $\sum_{u=1}^U q_{u} \left( \mathbf{v}_{u,k+1} - \bar{\boldsymbol{w}}_{k_{0}} \right) = \mathbf{v}_{u,k+1} - \bar{\boldsymbol{w}}_{k_{0}}$ and $\mathbb{E} || x - \mathbb{E}x ||^{2} \leq \mathbb{E}||x||^{2}$. Similarly, we have
\begin{align}
&\mathbb{E}_{\mathcal{M}} \mathbb{E} || \bar{\boldsymbol{w}}_{k+1} -\bar{\mathbf{v}}_{k+1} ||^{2} \nonumber \\
&\leq \frac{1}{M} \sum_{u=1}^{M} q_{u} E \sum_{i=k_{0}}^{k} \mathbb{E} || \eta_{i} \nabla F_{u} (\boldsymbol{w}_{u,i}, \xi_{i, u})||^2 \leq \frac{4}{M} \eta_{k}^{2} T^{2} G^{2},
\end{align}
where in the last inequality we use the fact that $\eta_{k}$ is non-increasing and $\eta_{k_{0}} \leq 2\eta_{k}$. It concludes the proof.

\subsection{Proof of Lemma \ref{lemma4}} \label{prooflemma4}
\noindent
Since $\mathbb{E} \Tilde{g}_{k}^{t} = g_{k}^{t}$ and $\Tilde{g}_{u,k}^{t} = g_{u,k}^{t} + \xi_{u,k}^{t}$, where $\Tilde{g}_{u,k}^{t}$ is the stochastic gradient estimate, $g_{u,k}^{t}$ the true gradient, and $\xi_{u,k}^{t}$ the noise term arising from the limited number of measurements in practical PQC execution. Here, $\xi_{u,k}^{t}$ satisfies $\mathbb{E}[\xi_{u,k}^{t}] = 0$ and $\text{var}(\xi_{u,k}^{t}) = \mathbb{E}[\Vert \Tilde{g}_{u,k}^{t} - g_{u,k}^{t}\Vert^2 ]$. The variance of the stochastic gradient estimate is
\begin{align}
    &\mathbb{E}\bigg[||\Tilde{g}_{k}^{t} - g_{k}^{t}||^{2}\bigg] =\frac{C_{1}}{U^2}\sum_{u=1}^{U}\Big[\big\|g_{u,k}^{t}\big\|^2 + \frac{\sigma^2}{UB} + \frac{\text{var}(\xi_{u,k}^{t})}{U}\Big] \nonumber \\
\end{align}
\begin{align}
    &\mathbb{E}\bigg[\big\| \Tilde{g}_{k}^{t}\big\|^2\bigg] 
    = \mathbb{E}\bigg[\big\| \Tilde{g}_{k}^{t} - \mathbb{E}[\Tilde{g}_{k}^{t}]\big\|^2\bigg] + \big\|\mathbb{E}[\Tilde{g}_{k}^{t}]\big\|^2 \nonumber \\
    &\leq \bigg(\frac{C_{1}+U}{U^2}\bigg)\sum_{u=1}^{U}\big\|g_{k}^{t}\big\|^2 + \frac{\sigma^2}{U|\xi_{u}^{(\ell),t}|} + \frac{\text{var}(\gamma^{(\ell),t})}{U}.
\end{align}
Using the upper bound on the weighted gradient diversity, $\sum_{u=1}^{U}\big\|g_{u,k}^{t}\big\|^2 \leq \lambda U \big\|g_{u,k}^{t}\big\|^2$, we get
\begin{align}
    &\mathbb{E}\bigg[||\Tilde{g}_{k}^{t}||^2\bigg] \leq \bigg(\frac{C_{1}}{U}+1\bigg) \bigg[\sum_{u=1}^{U}||\nabla f_{u}(\boldsymbol{w}_{u,k}^{t})||^{2}\bigg] + \frac{\sigma^{2}}{UB} \nonumber \\ 
    &\leq \lambda \bigg(\frac{C_{1}}{U}+1\bigg) \bigg[\sum_{u=1}^{U}||\nabla f_{u}(\boldsymbol{w}_{u,k}^{t})||^{2}\bigg] + \frac{\sigma^{2}}{UB} + \frac{\text{var}(\xi_{u,k}^{t})}{M},
\end{align}
where $B$ is the mini-batch size, completing the proof. 

\subsection{Proof of Lemma \ref{lemma5}} \label{prooflemma5}
\noindent \textcolor{black}{A critical quantum-specific derivation occurs in this proof. Unlike classical variance, which scales with data heterogeneity, the gradient estimate variance here is derived utilizing the {parameter-shift rule ($\pm \pi/2$)} and the {Cauchy-Schwarz inequality} applied to the observable's eigendecomposition. This quantifies the statistical uncertainty present in collapsing the wavefunction into discrete eigenvalues during measurement.}

We fix global and local iteration indices $k$ and $t$, and omit them for brevity temporarily. We define $X_{u,d,\pm} = {\langle \hat{Z} \rangle}_{\lvert \Psi_{u}(\boldsymbol{w}_{u} \pm \frac{\pi}{2} e_{d}) \rangle} - {\langle Z \rangle}_{\lvert \Psi_{u}(\boldsymbol{w}_{u} \pm \frac{\pi}{2} e_{d}) \rangle}$, where $\lvert \Psi_{u,d,\pm} \rangle = {\lvert \Psi_{u}(\boldsymbol{w}_{u} \pm \frac{\pi}{2} e_{d}) \rangle}$ is the quantum state with the $d^{\text{th}}$ parameter shifted by $\pm \frac{\pi}{2}$. The variance of the gradient estimate is written as $\text{var}(\xi_{u}) = \mathbb{E} [ \sum_{d=1}^{D} ( \frac{1}{2} ( {\langle \hat{Z} \rangle}_{\lvert \Psi_{u,d,+} \rangle} - {\langle \hat{Z} \rangle}_{\lvert \Psi_{u,d,-} \rangle}) \hspace{1em} - \frac{1}{2} ( {\langle Z \rangle}_{\lvert \Psi_{u,d,+} \rangle} - {\langle Z \rangle}_{\lvert \Psi_{u,d,-} \rangle}) )^{2} ] = \sum_{d=1}^{D} \frac{1}{4}  ( \mathbb{E}[X_{u,d,+}^{2}] - \mathbb{E}[X_{u,d,-}^{2}] )$,
where the expectation is over $H$ measurements of the quantum states ${\lvert \Psi_{u}(\boldsymbol{w}_{u} + \frac{\pi}{2} e_{d}) \rangle}$ and ${\lvert \Psi_{u}(\boldsymbol{w}_{u} - \frac{\pi}{2} e_{d}) \rangle}$ for $d = 1, 2, \dots, D$. Thus, $X_{u,d,+}$ and $X_{u,d,-}$ are independent, giving the inequality. We note that $\mathbb{E}[X_{u,d,+}^{2}]$ = $\text{var}({\langle \hat{Z} \rangle}_{\lvert \Psi_{u,d,+} \rangle})$. Let $Y$ index the measurement outcome of observable $Z$, so that $Z = h_{Y}$, and define $W_{y} = \mathbb{I}\{Y = y\}$ determining whether $Y = y (W_{y} = 1)$ or not $(W_{y} = 0)$. Therefore, it follows from the definition ${\langle \hat{Z} \rangle}_{\lvert \Psi_{u,d,+} \rangle}$ that $\mathbb{E}[X_{u,d,+}^{2}] = \frac{1}{H} \text{var}( \sum_{y=1}^{N_{z}} h_{y} W_{y} ) = \frac{1}{H} \mathbb{E} [ ( \sum_{y=1}^{N_{z}} h_{y} (W_{y} - p(y|\boldsymbol{w}_{u} + e_{d} \frac{\pi}{2})) )^{2} ] \leq \frac{N_{z}}{N_{y}} (\sum_{y=1}^{N_{z}} h_{y}^{2}) v = \frac{N_{z} \operatorname{Tr}(Z^{2})}{H} v$, where \textcircled{\footnotesize 1} is from Cauchy-Schwarz inequality, \textcircled{\footnotesize 2} occurs due to the fact that the variance of $W_{y}$ is computed as $\text{var}(W_{y}) = \mathbb{E}[W_{y}^{2}] - (\mathbb{E}[W_{y}])^{2} = v(p(y|\boldsymbol{w}_{u} + e_{d} \frac{\pi}{2}))$,
where $v(x) = x (1-x)$ for $x \in (0,1)$. The last term arises from the definition of $v$. Similarly, it can be shown that the following inequality holds $\mathbb{E}[X_{u,d,-}^{2}] \leq \frac{N_{z} \operatorname{Tr}(Z^{2})}{H} v$. We can write while bringing the omitted indices back $\text{var}(\xi_{u,k}^{t}) \leq \frac{\nu N_{z} D Tr(Z^{2})}{2H}$. For $U$ number of DQL clients, we get $\text{var}(\xi_{k}^{t}) \leq \frac{1}{U} \sum_{u \in \mathcal{U}} \frac{\nu N_{z} D Tr(Z^{2})}{2H}$, concluding the proof.

\subsection{Proof of Lemma \ref{lemma6}} \label{prooflemma6}
\noindent
If $k = i_{c}$ is the most recent global communication round, $\bar{\boldsymbol{w}}^{i_{c}+1} = \frac{1}{U} \sum_{u \in \mathcal{U}} \boldsymbol{w}_{u}^{i_{c + 1}}$. The local solution at device $u$ at any particular iteration $i > i_{c}$, where $i$ is assumed to represent the most recent iteration, encompassing all global and local iterations up to the current point, can be written as: $\boldsymbol{w}_{u,k}^{t} = \boldsymbol{w}_{u}^{i} = \boldsymbol{w}_{u}^{i-1} - \eta_{i_{c}} \Tilde{g}_{u}^{i-1} = \bar{\boldsymbol{w}}^{i_{c}+1} - \sum_{z=i_{c}+1}^{i-1} \eta_{i_{c}} \Tilde{g}_{u}^{z}$. Now, we compute the average virtual model at iteration $i$ as follows: $\bar{\boldsymbol{w}}^{i} = \bar{\boldsymbol{w}}^{i_{c}+1} - \frac{1}{U} \sum_{u \in \mathcal{U}} \sum_{z=i_{c}+1}^{i-1} \eta_{i_{c}} \Tilde{g}_{u}^{z}$. Firstly, without loss of generality, suppose $i = s_{t} T + r$, with $s_{t}$ and $r$ denoting the indices of global communication round and local updates, respectively. Next, we consider that for $i_{c}+1 < i \leq i_{c} + T$, $\mathbb{E}_{i}||\bar{\boldsymbol{w}}^{i} - \boldsymbol{w}_{u}^{i}||$ does not depend on time $i \leq i_{c}$ for $1 \leq u \leq U$. Therefore, for all iterations $1 \leq i \leq I$, where $I = KT$, we can write, $\frac{1}{KT} \sum_{k=1}^{K} \sum_{t=1}^{T} \sum_{u=1}^{U} \mathbb{E}||\bar{\boldsymbol{w}}_{k}^{t} - \boldsymbol{w}_{u,k}^{t}||^{2} = \frac{1}{I} \sum_{i=1}^{I} \sum_{u=1}^{U} \mathbb{E}||\bar{\boldsymbol{w}}^{i} - \boldsymbol{w}_{u}^{i}||^{2} \hspace{4em} = \frac{1}{I} \sum_{s_{t}=1}^{\frac{I}{T} - 1} \sum_{r=1}^{T} \sum_{u=1}^{U} \mathbb{E}||\bar{\boldsymbol{w}}^{s_{t} E + r} - \boldsymbol{w}_{u}^{s_{t} E + r}||^{2}$.
We bound the term $\mathbb{E}||\bar{\boldsymbol{w}}^{i} - \boldsymbol{w}_{l}^{i}||^{2}$ for $i_{c}+1 \leq i = s_{t} T + r \leq i_{c} + T $ in threes steps: (1) We first relate this quantity to the variance between stochastic gradient and full gradient, (2) We use Assumption \ref{Assumption1}, (3) We use Assumption \ref{Assumption3} to bound the final terms. Here, $l$ is associated with an individual device, while $u$ is used for summing over devices.

\vspace{-8pt}
\begin{align}
    &\mathbb{E}||\bar{\boldsymbol{w}}^{s_{t} E + r} - \boldsymbol{w}_{l}^{s_{t} E + r}||^{2} \nonumber \\
    &\stackrel{\textcircled{\footnotesize 1}}{=} \mathbb{E}|| \sum_{z=1}^{r} \eta_{i_{c}} \Tilde{g}_{l}^{s_{t}+z} - \frac{1}{U} \sum_{u \in \mathcal{U}} \sum_{z=1}^{r} \eta_{i_{c}} \Tilde{g}_{u}^{s_{t}+z}||^{2} \nonumber \\
    &\hspace{4em} - \mathbb{E} ||\sum_{z=1}^{r} \eta_{i_{c}} \Tilde{g}_{l}^{s_{t}+z}||^{2} + \mathbb{E} ||\frac{1}{U} \sum_{u \in \mathcal{U}} \sum_{z=1}^{r} \eta_{i_{c}} \Tilde{g}_{u}^{s_{t}+z} \nonumber \\
    &- \mathbb{E}\bigg[\frac{1}{U} \sum_{u \in \mathcal{U}} \sum_{z=1}^{r} \eta_{i_{c}} \Tilde{g}_{u}^{s_{t}+z} \bigg] ||^{2} \bigg] + ||\mathbb{E}\bigg[\frac{1}{U} \sum_{u \in \mathcal{U}} \sum_{z=1}^{r} \eta_{i_{c}} \Tilde{g}_{u}^{s_{t}+z}\bigg]||^{2} \nonumber \\
    &\hspace{10em} + ||\frac{1}{U} \sum_{u \in \mathcal{U}} \sum_{z=1}^{r} \eta_{i_{c}} g_{u}^{s_{t}T+z}||^{2}\bigg),
\end{align}
where \textcircled{\footnotesize 1} holds because $i = s_{t}T+r \leq i_{c} + T$, \textcircled{\footnotesize 2} comes from Assumption 0.1.
\begin{align}
    &= 2 \bigg( \bigg[ \sum_{z=1}^{r} \eta_{i_{c}}^{2} \mathbb{E} ||\Tilde{g}_{l}^{s_{t}T+z} - g_{l}^{s_{t}T+z}||^{2} + r \sum_{z=1}^{r}\eta_{i_{c}}^{2} \mathbb{E} ||g_{l}^{s_{t}T+z}||^{2}\bigg] \nonumber \\
    & + \frac{1}{U^{2}} \sum_{u \in \mathcal{U}} \sum_{z=1}^{r} \eta_{i_{c}}^{2} \mathbb{E} ||\Tilde{g}_{u}^{s_{t}T+z} - g_{u}^{s_{t}T+z}||^{2} \nonumber \\
    & + \frac{r}{N^{2}}  \sum_{u \in \mathcal{U}} \sum_{z=1}^{r} \eta_{i_{c}}^{2} \mathbb{E} ||g_{u}^{s_{t}T+z} ||^{2} \bigg), \label{eqn55}
\end{align}
Our next step is to bound the terms in \eqref{eqn55} using Assumption 3 as follows:
$\mathbb{E} ||\bar{\boldsymbol{w}}_{k}^{t} - \boldsymbol{w}_{l,k}^{t}||^{2} \leq = 2 ( [ \sum_{z=1}^{r} \eta_{i_{c}}^{2} C_{1} ||g_{l}^{s_{t}T+z}||^{2} + \sum_{z=1}^{r} \eta_{i_{c}}^{2} \frac{\sigma^{2}}{B} + r \sum_{z=1}^{r} \eta_{i_{c}}^{2} || g_{l}^{s_{t}T+z}||^{2} ] + \frac{1}{U^{2}} \sum_{u \in \mathcal{U}} \sum_{z=1}^{r} \eta_{i_{c}}^{2} C_{1} ||g_{u}^{s_{t}T+z}||^{2} + \sum_{z=1}^{r} \eta_{i_{c}}^{2} \frac{\sigma^{2}}{UB} + \frac{r}{U^{2}} \sum_{u \in \mathcal{U}} \sum_{z=1}^{r} \eta_{i_{c}}^{2} || g_{u}^{s_{t}T+z}||^{2} )$.
Now we determine the upper bound for $\sum_{r=1}^{T} \sum_{u=1}^{U} [\mathbb{E} ||\bar{\boldsymbol{w}}_{k}^{t} - \boldsymbol{w}_{u,k}^{t}||]$ using as follows: $\sum_{r=1}^{T} \sum_{u=1}^{U} [\mathbb{E} ||\bar{\boldsymbol{w}}^{s_{t}T+z} - \boldsymbol{w}_{u}^{s_{t}T+z}||] = \frac{\eta_{i_{c}}^{2} (U+1)}{U} ([ (2 C_{1} + T(T+1)) \sum_{z=1}^{T} \sum_{u=1}^{U} || g_{u}^{s_{t}T+z}||^{2} ] + \frac{T(T+1)\sigma^{2}}{B})$,
where \textcircled{\footnotesize 1} follows from the fact that the terms $||g_{l}||^{2}$ are positive. Now, taking summation over global communication rounds gives: $\sum_{s_{t}=1}^{I/T-1} \sum_{r=1}^{T} \sum_{u=1}^{U} [\mathbb{E} ||\bar{\boldsymbol{w}}^{s_{t}T+z} - \boldsymbol{w}_{u}^{s_{t}T+z}||] = \frac{\eta_{i_{c}}^{2} (U+1)}{U} ( [ (2 C_{1} + T(T+1)) \sum_{i=1}^{I} \sum_{u=1}^{U} || g_{u}^{i}||^{2} ] + \frac{I(T+1)\sigma^{2}}{B} )$,
which leads to $\frac{1}{I} \sum_{i=1}^{I} \sum_{u=1}^{U} [\mathbb{E} ||\bar{\boldsymbol{w}}^{i} - \boldsymbol{w}_{u}^{i}||] \stackrel{\textcircled{\footnotesize 1}}{\leq} \frac{(2 C_{1} + T(T+1))}{I} \frac{\lambda \eta_{i_{c}}^{2} (U+1)}{U} \sum_{i=0}^{I-1} || \sum_{u=1}^{U} g_{u}^{i}||^{2} + \frac{\eta_{i_{c}}^{2} I(U+1)(T+1)\sigma^{2}}{UB}$, where \textcircled{\footnotesize 1} follows from the definition of weighted gradient diversity and the upper bound assumption. Now, $\frac{1}{KT} \sum_{k=1}^{K} \sum_{t=1}^{T} \sum_{u=1}^{U} [\mathbb{E} ||\bar{\boldsymbol{w}}_{k}^{t} - \boldsymbol{w}_{u,k}^{t}||] \leq \frac{(2 C_{1} + T(T+1))}{KT} \frac{\lambda \eta_{i_{c}}^{2} (U+1)}{U} \sum_{k=1}^{K} \sum_{t=1}^{T} || \sum_{u=1}^{U} g_{u,k}^{t}||^{2} + \frac{\eta_{i_{c}}^{2} KT(U+1)(T+1)\sigma^{2}}{UB}$. We note that,
\begin{align}
    &\frac{1}{KT} \sum_{k=1}^{K} \sum_{t=1}^{T} \sum_{u=1}^{U} \bigg[||\bar{\boldsymbol{w}}_{k}^{t} - \boldsymbol{w}_{u,k}^{t}||^{2}\bigg] \nonumber \\
    &= \frac{1}{KT} \sum_{k=1}^{K} \sum_{t=1}^{T} \sum_{u=1}^{U} \bigg[\underbrace{||\bar{\boldsymbol{w}}_{k}^{t} - \bar{\mathbf{v}}_{k}^{t}||^{2}}_{A_{1}}\bigg] \nonumber \\
    &+ \frac{1}{KT} \sum_{k=1}^{K} \sum_{t=1}^{T} \sum_{u=1}^{U} \bigg[\underbrace{||\bar{\mathbf{v}}_{k}^{t} - \boldsymbol{w}_{u,k}^{t} ||^{2}}_{A_{2}}\bigg] \nonumber \\
    &+ \frac{1}{KT} \sum_{k=1}^{K} \sum_{t=1}^{T} \sum_{u=1}^{U} \bigg[\underbrace{2 \langle \bar{\boldsymbol{w}}_{k}^{t} - \bar{\mathbf{v}}_{k}^{t}, \bar{\mathbf{v}}_{k}^{t} - \boldsymbol{w}_{u,k}^{t}  \rangle}_{A_{3}}\bigg].
\end{align}
If expectation is taken over $\mathcal{M}$, $(A_{3})$ vanishes due to unbiasedness of $\bar{\boldsymbol{w}}_{k}^{t}$. If it is not a global communication round, $A_{1}$ also vanishes as $\bar{\boldsymbol{w}}_{k}^{t} = \bar{\mathbf{v}}_{k}^{t}$. We use Lemma \ref{lemma3} to bound $A_{2}$. $\frac{1}{KT} \sum_{k=1}^{K} \sum_{t=1}^{T} \sum_{u=1}^{U} \bigg[\mathbb{E} ||\bar{\boldsymbol{w}}_{k}^{t} - \boldsymbol{w}_{u,k}^{t}||^{2}\bigg] \leq \frac{(2 C_{1} + T(T+1))}{KT} \frac{\lambda \eta_{i_{c}}^{2} (U+1)}{U} \sum_{k=0}^{K-1} \sum_{t=0}^{T-1} || \sum_{u=1}^{U} g_{u,k}^{t}||^{2} + \frac{\eta_{i_{c}}^{2} KT(U+1)(T+1)\sigma^{2}}{UB}.$

For global round, $\bar{\boldsymbol{w}}_{k}^{t}$ and $\bar{\mathbf{v}}_{k}^{t}$ are equivalent to $\bar{\boldsymbol{w}}_{k+1}$ and $\bar{\mathbf{v}}_{k+1}$. Hence, we additionally use Lemma \ref{lemma3} to bound $A_{1}$. Then $\frac{1}{KT} \sum_{k=1}^{K} \sum_{t=1}^{T} \sum_{u=1}^{U} [\mathbb{E} ||\bar{\boldsymbol{w}}_{k}^{t} - \boldsymbol{w}_{u,k}^{t}||^{2}] \leq \frac{4}{M} \eta_{k}^{2} T^{2} G^{2} + \frac{(2 C_{1} + T(T+1))}{KT} \frac{\lambda \eta_{i_{c}}^{2} (U+1)}{U} \sum_{k=1}^{K} \sum_{t=1}^{T} || \sum_{u=1}^{U} g_{u,k}^{t}||^{2} + \frac{\eta_{i_{c}}^{2} KT(U+1)(T+1)\sigma^{2}}{UB}$.

\subsection{Proof of Theorem 1} \label{prooftheorem1}
\noindent
Using Lemma \ref{lemma1} and Lemma \ref{lemma4}, we continue to further upper bound the inequality in (13) as follows:
\begin{align}
    &\frac{1}{KT} \sum_{k=1}^{K} \sum_{t=1}^{T} \mathbb{E}[f(\bar{\boldsymbol{w}}_{k}^{t+1}) - f(\bar{\boldsymbol{w}}_{k}^{t})] \nonumber \\
    &\leq \frac{1}{KT} \sum_{k=1}^{K} \sum_{t=1}^{T} \Bigg(-\eta_{k} \mathbb{E}\bigg[\langle \nabla f(\bar{\boldsymbol{w}}_{k}^{t}), \Tilde{g}_{k}^{t} \rangle\bigg]\Bigg) \nonumber \\
    & \hspace{6em} + \frac{1}{KT} \sum_{k=1}^{K} \sum_{t=1}^{T} \frac{\eta_{k}^{2}  L}{2} \mathbb{E}\bigg[||\Tilde{g}_{k}^{t}||^{2}\bigg] \nonumber \\
    &= \frac{1}{KT} \sum_{k=1}^{K} \sum_{t=1}^{T} \bigg(-\frac{\eta_{k}}{2}||\nabla f(\bar{\boldsymbol{w}}_{k}^{t})||^2 - \frac{\eta_{k}}{2}||\sum_{u=1}^{U}\nabla f_{u}(\boldsymbol{w}_{u,k}^{t})||^{2}\bigg) \nonumber \\
    &+ \frac{\lambda \eta_{k} L^{2}}{2KT} \frac{U+1}{U} \Bigg(\lambda \bigg[2C_{1}+T(T+1)\bigg] \eta_{k}^{2} \frac{1}{KT} \sum_{k=1}^{K} \sum_{t=1}^{T}||^2 \nonumber \\
    &- \frac{\eta_{k}}{2}||\sum_{u=1}^{U}\nabla f_{u}(\boldsymbol{w}_{u,k}^{t})||^{2} \Bigg) \nonumber \\
    &+ \frac{KT(L+1)\eta_{k}^{2} \sigma^{2}}{B} + \frac{1}{KT} \sum_{k=1}^{K} \sum_{t=1}^{T} \frac{\lambda L \eta_{k}^{2}}{2} \lambda \bigg(\frac{C_{1}}{U}+1\bigg) \nonumber \\
    &\bigg[||\sum_{u=1}^{U}\nabla f_{u}(\boldsymbol{w}_{u,k}^{t})||^{2}\bigg] + \frac{L \eta_{k}^{2}}{2} \frac{\sigma^{2}}{UB} + \frac{\text{var}(\xi_{u,k}^{t})}{M}. \label{39} 
\end{align}
Now, we have
\begin{align}
    &\frac{1}{KT} \sum_{k=1}^{K} \sum_{t=1}^{T} \mathbb{E}[f(\bar{\boldsymbol{w}}_{k}^{t+1}) - f(\bar{\boldsymbol{w}}_{k}^{t})] \nonumber \\
    &\stackrel{\textcircled{\footnotesize 1}}{\leq} - \frac{1}{KT} \sum_{k=1}^{K} \sum_{t=1}^{T} \frac{\eta_{k}}{2}||\nabla f(\bar{\boldsymbol{w}}_{k}^{t})||^2 + \frac{\eta_{k}^{3} L^{2} (T+1) \sigma^{2}}{B} \nonumber \\
    &\times \bigg(\frac{U+1}{U}\bigg) + \frac{L \eta_{k}^{2}}{2} \frac{\sigma^{2}}{UB} + \frac{\text{var}(\xi_{u,k}^{t})}{M}, \label{eqn40}
\end{align}
where \textcircled{\footnotesize 1} follows if the following condition holds: $-\frac{\eta_{k}}{2} + \frac{\lambda (U+1) L^{2} \eta_{k}^{3} [2C_{1}+T(T+1)]}{2U} + \frac{\lambda L \eta_{k}^{2}}{2} \bigg(\frac{C_{1}}{U}+1\bigg) \leq 0$.
Thus, it can be calculated to reveal that
\begin{align}
        &\frac{1}{KT} \sum_{k=1}^{K} \sum_{t=1}^{T} \mathbb{E}||\nabla f(\bar{\boldsymbol{w}}_{k}^{t})||^{2} \leq \frac{2 [f(\bar{\boldsymbol{w}}_{1}^{0}) - f^{*}]}{\eta_{k} KT} + \frac{L \eta \sigma^{2}}{UB} \nonumber \\
        &+ \frac{2 \eta_{k}^{2} \sigma^{2} L^{2} (T+1)}{B} \bigg(1+\frac{1}{U}\bigg) + \frac{1}{U} \sum_{u \in \mathcal{U}} \frac{\nu N_{z} D Tr(Z^{2})}{2H} \nonumber \\
        &\hspace{17em}+ C, \label{eqn79}
\end{align}
\textcolor{black}{where $C \propto \frac{1}{M} T^{2} \eta_{k_{0}}^{2} G^{2}$ is the additional term resulting from the partial device participation. Specifically, $C$ represents the aggregation mismatch between the ideal full-participation update (aggregating all $U$ devices) and the practical update obtained from only $M$ selected devices per round. The inverse relationship on $M$ indicates that choosing fewer devices increases this penalty due to stronger client sub-sampling effects. The quadratic dependence on $T$ and $\eta_{k_0}^{2}$ denotes the accumulation of local drift over $T$ local steps before synchronization, and larger step sizes amplify this effect. Finally, $G^{2}$ represents the sensitivity of the bound to the magnitude of local update directions. The larger gradient norms yield a larger worst-case penalty. Therefore, $C$ formalizes the communication-efficiency versus convergence-accuracy trade-off in DQL. Also, this term bounds the convergence rate of DQL under partial device participation.}

\textcolor{black}{The term $\frac{1}{U} \sum_{u \in \mathcal{U}} \frac{\nu N_{z} D Tr(Z^{2})}{2H}$ represents the direct impact of the quantum physical layer. It explicitly couples the number of measurement shots ($H$) and calibration noise ($\nu$) with the properties of the observable ($Tr(Z^2)$). This demonstrates that the global convergence trajectory is inherently restricted by the statistical uncertainty of wavefunction collapse.}

\vspace{-5em}
\begin{IEEEbiographynophoto}{Atit Pokharel} is a Ph.D. student in the Department of Electrical and Computer Engineering at The University of Alabama in Huntsville. His research focuses on secure quantum computing systems, quantum computing and quantum sensing with several top-tier IEEE publications.
\end{IEEEbiographynophoto}
\vspace{-5em}
\begin{IEEEbiographynophoto}{Shaba Shaon} is currently pursuing her Ph.D. in the Department of Electrical and Computer Engineering at The University of Alabama in Huntsville, Huntsville, AL, USA. She is with the Networking, Intelligence, and Security Research Lab at the same institution.  Her research interests focus on distributed quantum computing and quantum sensing.

\end{IEEEbiographynophoto}
\vspace{-5em}
\begin{IEEEbiographynophoto}{Thomas Morris} is a professor at the Department of Electrical and Computer Engineering, The University of Alabama in Huntsville (UAH), USA. He is the director of Center for Cybersecurity Research and Engineering (CCRE) at UAH. His current research involves security for industrial control systems.


\end{IEEEbiographynophoto}

\begin{IEEEbiographynophoto}{Dinh C. Nguyen} is an assistant professor at the Department of Electrical and Computer Engineering, The University of Alabama in Huntsville, USA.  He obtained the Ph.D. degree in computer science from Deakin University, Australia in 2021. His  research focuses on quantum computing, federated  learning,  wireless networking, and security.   

\end{IEEEbiographynophoto}

\end{document}